\declaretheorem[numberwithin=section,refname={Theorem,Theorems},Refname={Theorem,Theorems}]{theorem}
\declaretheorem[numberlike=theorem]{lemma}
\declaretheorem[numberlike=theorem]{claim}
\DeclareMathOperator*{\argmin}{arg\,min}
\def\poly{\operatorname{poly}}
\def\polylog{\operatorname{polylog}}
\def\head{\operatorname{head}}
\def\tail{\operatorname{tail}}
\def\tind{\mathcal{T}_{\mathsf{ind}}}
\def\trank{\mathcal{T}_{\mathsf{rank}}}
\def\rank{\mathsf{rank}}
\renewcommand{\tilde}{\widetilde}
\renewcommand{\hat}{\widehat}
\renewcommand{\bar}{\overline}
\def\namedlabel#1#2{\begingroup
   \def\@currentlabel{#2}%
   \label{#1}\endgroup
}
\title{Subquadratic Weighted Matroid Intersection Under Rank Oracles\thanks{
To appear in the \emph{33rd International Symposium on Algorithms
and Computation (ISAAC 2022)}.}}
\author{Ta-Wei Tu\thanks{Department of Computer Science and Information
  Engineering, National Taiwan University, Taipei, Taiwan.
  Email:
  \href{mailto:tu.da.wei@gmail.com}{tu.da.wei@gmail.com}}}
\date{}
\begin{document}

\maketitle

\begin{abstract}
  Given two matroids $\mathcal{M}_1 = (V, \mathcal{I}_1)$ and
  $\mathcal{M}_2 = (V, \mathcal{I}_2)$ over an $n$-element integer-weighted
  ground set $V$, the weighted matroid intersection problem aims to find a
  common independent set $S^{*} \in \mathcal{I}_1 \cap \mathcal{I}_2$ maximizing
  the weight of $S^{*}$.
  In this paper, we present a simple deterministic algorithm for weighted
  matroid intersection using
  $\tilde{O}(nr^{3/4}\log{W})$ rank queries,
  where $r$ is the size of the largest intersection of $\mathcal{M}_1$ and
  $\mathcal{M}_2$ and $W$ is the maximum weight.
  This improves upon the best previously known $\tilde{O}(nr\log{W})$
  algorithm
  given by Lee, Sidford, and Wong [FOCS'15], and is the first subquadratic
  algorithm for polynomially-bounded weights under the standard
  independence or rank oracle models.
  The main contribution of this paper is an efficient algorithm that computes
  shortest-path trees in weighted exchange graphs.
\end{abstract}

\section{Introduction}\label{sec:intro}

\paragraph{Matroid Intersection.}
A matroid is an abstract structure that models the notion of independence on a
given ground set $V$.
In particular, a subset $S \subseteq V$ is either \emph{independent} or
\emph{dependent}, such that the family of independent sets is well-structured
(see \Cref{sec:prelim} for a complete definition).
Matroids model many fundamental combinatorial objects, and examples of
independent sets of a matroid include acyclic subgraphs of an undirected graph
and linearly independent rows of a matrix.
One of the most important optimization problems related to matroids is
\emph{matroid intersection}:
Given two matroids, we would like to find a set with the largest cardinality
that is independent in both matroids.
Similarly, in the weighted case, each element in the ground set is associated
with an integer weight, and the weighted matroid intersection problem is to
find the maximum-weight common independent set.
These problems have been extensively studied in the past since they capture
many combinatorial optimization problems such as bipartite
matching and colorful spanning trees.

\paragraph{Oracle Model.}
Since we are dealing with general matroids without additional constraints,
we have to specify a way of reading the description of the two matroids.
One way is to express them directly by reading the truth table of independence.
However, that would require an exponentially-sized input.
Instead, we are given oracle access to the matroids,
which gives us information about a queried set $S \subseteq V$.
Standard oracles include the \emph{independence oracle}, which returns
whether $S$ is independent in $O(\tind)$ time, and
the \emph{rank oracle}, which returns the rank, i.e., the size of the
largest independent subset, of $S$ in $O(\trank)$ time.
In this paper, we focus on the stronger rank oracle model.

\paragraph{Prior Work.}
Polynomial-time algorithms for both the weighted and
unweighted matroid intersection problems have long been designed and
improved.
For the unweighted case,
Edmonds~\cite{edmonds1968,edmonds1970,edmonds1979},
Lawler~\cite{lawler1975},
and also Aigner and Dowling~\cite{aigner1971} gave algorithms that run in
$O(nr^2 \cdot \tind)$ time.
Here, $n$ denotes the number of elements in $V$ and $r$ denotes the size of
the largest intersection of the two matroids.
Cunningham~\cite{cunningham1986} obtained an $O(nr^{3/2} \cdot \tind)$
algorithm using the ``blocking-flow'' idea.
Lee, Sidford, and Wong~\cite{lee2015} gave quadratic algorithms using
the cutting-plane method,
running in $\tilde{O}(nr \cdot \trank + n^3)$
and $\tilde{O}(n^2 \cdot \tind + n^3)$ times\footnote{
For function $f(n)$, $\tilde{O}(f(n))$ denotes $O(f(n)\polylog{f(n)})$.}
, respectively.
This also gives rise to the ``quadratic barrier'' of matroid intersection: most
previous algorithms involve building exchange graphs that contain $\Theta(nr)$
edges explicitly and therefore cannot go beyond quadratic time.
Chakrabarty, Lee, Sidford, Singla, and Wong~\cite{chakrabarty2019}
were the first to partially break the barrier.
They obtained a
$(1 - \epsilon)$-approximation algorithm running in
$\tilde{O}(n^{3/2}/\epsilon^{3/2} \cdot \tind)$
time and an
$\tilde{O}(n\sqrt{r} \cdot \trank)$ exact algorithm.
One of the major components of Chakrabarty et al.'s improvements is to show that
edges in exchange graphs can be efficiently discovered using binary search
(this was discovered independently by Nguy\~{\^{e}}n~\cite{nguyen2019}).
This technique also allows them to obtain improved $\tilde{O}(nr \cdot \tind)$
exact algorithms.
Combining the approximation algorithm and a faster augmenting-path algorithm,
Blikstad, van den Brand, Mukhopadhyay, and Nanongkai~\cite{blikstad2021}
broke the quadratic barrier completely by
giving an $\tilde{O}(n^{9/5} \cdot \tind)$ exact algorithm.
This result was later optimized by Blikstad~\cite{blikstad2021-2} to
$\tilde{O}(nr^{3/4} \cdot \tind)$ by improving the approximation
algorithm to run in $\tilde{O}(n\sqrt{r}/\epsilon \cdot \tind)$ time.

For the weighted case, the blocking flow idea does not seem to apply anymore.
Frank~\cite{frank1981} obtained an $O(nr^2 \cdot \tind)$
algorithm by characterizing the optimality of a common independent set using
weight splitting.
Fujishige and Zhang~\cite{fujishige1995} improved the running time to
$\tilde{O}(nr^{3/2}\log{W} \cdot \tind)$ by solving a more
general \emph{independent assignment} problem using a scaling framework.
The same bound was achieved by Shigeno and Iwata~\cite{shigeno1995} and also
by Gabow and Xu~\cite{gabow1996}.
Lee, Sidford, and Wong's~\cite{lee2015} algorithms work for the weighted case as
well, albeit with an extra factor of $\polylog{W}$, in
$\tilde{O}(n^2\log{W} \cdot \tind + n^3\polylog{W})$ and
$\tilde{O}(nr\log{W} \cdot \trank + n^3\polylog{W})$ times.
Huang, Kakimura, and Kamiyama~\cite{huang2016} obtained a generic framework that
transforms any algorithm that solves the unweighted case into one that solves
the weighted case with an extra $O(W)$ factor.
Plugging in the state-of-the-art algorithms of~\cite{chakrabarty2019} and
\cite{blikstad2021-2}, we get
$\tilde{O}(n\sqrt{r} \cdot W \cdot \trank)$ and
$\tilde{O}(nr^{3/4} \cdot W \cdot \tind)$ algorithms.
Chekuri and Quanrud~\cite{chekuri2016} also gave an
$\tilde{O}(n^2/\epsilon^2 \cdot \tind)$ approximation
algorithm which, according to~\cite{blikstad2021},
can be improved to subquadratic by applying more recent techniques.
A similar $\tilde{O}(nr^{3/2}/\epsilon \cdot \tind)$
approximation algorithm was obtained independently by
Huang et al.~\cite{huang2016}.

\paragraph{Our Result.}
The question of whether weighted matroid intersection can be solved in
subquadratic time with polylogarithmic dependence on $W$ under either oracle
model remained open.
We obtain the first subquadratic algorithm for exact weighted matroid
intersection under rank oracles.
The formal statement of \Cref{thm:result} is presented as
\Cref{thm:weighted_matroid_intersection} in \Cref{sec:prelim}.

\begin{theorem}
  Weighted matroid intersection can be solved in
  $\tilde{O}(nr^{3/4}\log{W} \cdot \trank)$ time.
  \label{thm:result}
\end{theorem}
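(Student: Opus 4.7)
The plan is to combine three ingredients: a Fujishige--Zhang style scaling framework that reduces the weighted problem to $O(\log W)$ near-unweighted phases, a new subroutine that builds shortest-path trees in the weighted exchange graph using only $\tilde{O}(n)$ rank queries per tree, and a Blikstad-style approximation-plus-augmentation strategy that caps the number of such tree computations at $\tilde{O}(r^{3/4})$ per phase.

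As the starting point, I would invoke Frank's weight-splitting characterization of optimality: extending a common independent set $S$ along a minimum-weight, then shortest-length augmenting path in the weighted exchange graph, yields an optimal intersection one size larger. Running this from the empty set for $r$ iterations solves the problem but incurs $\Omega(nr)$ work per iteration if the exchange graph is built explicitly. A scaling framework (in the spirit of Fujishige--Zhang or Gabow--Xu) lets us factor out the $\log W$: we process the weights bit by bit, maintaining appropriate dual potentials across scales, so that in each of the $O(\log W)$ phases the residual problem behaves essentially like an unweighted augmentation sequence over the current near-optimal intersection.

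The technical heart of the argument is the weighted shortest-path tree subroutine, which is also the paper's advertised main contribution. For unweighted exchange graphs, Chakrabarty et al. and Nguyen observed that the out-edges of a vertex can be discovered via binary search, so that BFS runs in $\tilde{O}(n \cdot \trank)$ time. In the weighted case, distance labels are lexicographic pairs (total weight, edge count), and a Dijkstra-style sweep must interact with the edge-discovery oracle in a more delicate order. I would process vertices in Dijkstra order while batching binary searches so that each newly relaxed edge costs $\polylog(n)$ rank queries, charging each query either to a distance improvement on some vertex or to a certificate of non-adjacency that rules out an entire interval of candidates. A careful amortization should keep the total cost per tree at $\tilde{O}(n \cdot \trank)$.

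Finally, to reach the $r^{3/4}$ exponent, I would first use a $(1-\epsilon)$-approximation algorithm with $\epsilon = r^{-1/4}$ to bring the intersection within $O(r^{3/4})$ of the optimum, and then perform $\tilde{O}(r^{3/4})$ shortest-augmenting-path steps using the new subroutine; multiplying by $O(\log W)$ scaling phases yields the claimed $\tilde{O}(nr^{3/4}\log W \cdot \trank)$ bound. The main obstacle I anticipate is the weighted shortest-path tree construction itself: the natural Dijkstra order does not align with the natural scanning order of binary-search edge discovery, so the hardest step will be designing an amortization-friendly interleaving of the two, proving that $\tilde{O}(n)$ rank queries suffice per tree, and verifying that the paths found by this subroutine are compatible with the dual potentials maintained by the scaling loop so that each augmentation preserves Frank's optimality invariants.
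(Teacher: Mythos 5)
Your high-level architecture (scaling over $O(\log W)$ phases, an approximation step to shrink the gap, then a few shortest-augmenting-path steps driven by a query-efficient shortest-path-tree subroutine) matches the paper's skeleton, but the quantitative linchpin of your plan --- a weighted shortest-path tree in the exchange graph using only $\tilde{O}(n)$ rank queries --- is not what the paper proves, and your sketch of it has a genuine gap. The amortization you describe (charging each query either to a distance improvement or to a certificate of non-adjacency that rules out an interval of candidates) is exactly the unweighted BFS argument of Chakrabarty et al.: there, once you binary-search out of a visited vertex you may mark every discovered endpoint as visited and never touch it again. In Dijkstra order this breaks down. The cheapest out-edge of a settled vertex $x$ found by binary search need not settle its endpoint (a shorter route through another vertex may exist), and the remaining candidates in the searched range cannot be discarded, since $x$ may still turn out to be the shortest-path parent of one of them later. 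The paper does not overcome this obstacle; its subroutine costs $\tilde{O}(n\sqrt{r}\cdot\trank)$ per tree, obtained by buffering the roughly $\sqrt{r}$ most recently settled vertices, relaxing only the single cheapest outgoing edge of each buffered vertex per iteration, and periodically flushing the buffer with a full $\tilde{O}(n)$-query batch relaxation.

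Because of this, your parameter balance is also off. With $\tilde{O}(n\sqrt{r})$ per tree, running $\tilde{O}(r^{3/4})$ augmentations after a $(1-r^{-1/4})$-approximation costs $\tilde{O}(nr^{5/4})$ per phase, which does not give the claimed bound. The paper balances the other way: an auction-style weight-adjustment phase (Shigeno--Iwata) with parameter $k=r^{3/4}$ costs $\tilde{O}(nr^{3/4}\cdot\trank)$ and leaves two bases $S_1,S_2$ with $|S_1\setminus S_2|=O(r^{1/4})$, so only $O(r^{1/4})$ augmentations, each at $\tilde{O}(n\sqrt{r})$, are needed. To repair your argument you would either have to actually deliver the $\tilde{O}(n)$-query weighted shortest-path tree (a strictly stronger result than the paper's) or rebalance as the paper does; in either case the approximation step must preserve the weight-splitting duals (each $S_i$ remaining $w^{\epsilon}_i$-maximum), which an off-the-shelf unweighted $(1-\epsilon)$-approximation does not by itself guarantee.
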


Our algorithm relies on the framework of
Fujishige-Zhang~\cite{fujishige1995} and Shigeno-Iwata~\cite{shigeno1995},
where they first obtain an approximate solution by adjusting weights
of some elements (similar to the ``auction`` algorithms for bipartite
matching~\cite{orlin1992}) and then refine it by augmenting the solution
iteratively.

We obtain efficient algorithms for these two phases, leading to the
final subquadratic algorithm.

\section{Preliminaries}\label{sec:prelim}

\paragraph{Notation.}
For a set $S$, let $|S|$ denote the cardinality and $2^{S}$
the power set of $S$.
Let $S \setminus R$ consist of elements of $S$ which are not in $R$.
Let $e = (u, v, w)$ denote a weighted directed edge directing from $u$ to $v$
with weight $w = w(e)$ and $(u, v)$ be its unweighted counterpart.
Let $\head(e) = v$ and $\tail(e) = u$.
For an edge set $E$, let
$\head(E) = \{\head(e) \mid e \in E\}$ and
$\tail(E) = \{\tail(e) \mid e \in E\}$.
For functions $f, g$ mapping from a set $V$ to $\mathbb{R}$,
let $f + g$, $f - g$, and $f + c$ for $c \in \mathbb{R}$ denote
functions from $V$ to $\mathbb{R}$ with
$(f + g)(x) = f(x) + g(x)$,
$(f - g)(x) = f(x) - g(x)$, and
$(f + c)(x) = f(x) + c$ for each $x \in V$.
We often abuse notation and use $f$ to denote the function from
$2^V$ to $\mathbb{R}$ with $f(S) = \sum_{x \in S}f(x)$ for
each $S \subseteq V$.

\paragraph{Matroid.}
Let $V$ be a finite set and $w: V \to \mathbb{Z}$ be a given weight function.
For $S \subseteq V$, let $\bar{S} = V \setminus S$.
Let $n = |V|$ and $W = \max_{x \in V}|w(x)|$.
An ordered pair $\mathcal{M} = (V, \mathcal{I})$ with \emph{ground set} $V$ and
a non-empty family $\emptyset \in \mathcal{I} \subseteq 2^V$ is a \emph{matroid}
if

\begin{description}
  \item[M1.\namedlabel{item:downward_closure}{M1}] for each $S \in \mathcal{I}$
    and $R \subseteq S$, it holds that $R \in \mathcal{I}$, and
  \item[M2.] for each $R, S \in \mathcal{I}$ with $|R| < |S|$,
    there exists an $x \in S \setminus R$ such that
    $R \cup \{x\} \in \mathcal{I}$.
\end{description}

Sets in $\mathcal{I}$ are \emph{independent}; sets not in $\mathcal{I}$ are
\emph{dependent}.
A \emph{basis} is a maximal independent set.
A \emph{circuit} is a minimal dependent set.
It is well-known from the definition of matroid that all bases are of the same
cardinality.
For an independent set $S$ and $x \not\in S$, $S \cup \{x\}$ contains at most
one circuit $C$ and if it does, then $x \in C$
(see \cite[Lemma 1.3.3]{price2015}).
The \emph{rank} of $S \subseteq V$, denoted by $\rank(S)$, is the size of the
largest $S^\prime \subseteq S$ such that $S^\prime \in \mathcal{I}$.
The rank of $\mathcal{M}$ is the rank of $V$, i.e., the size of the bases of
$\mathcal{M}$.
Given two matroids $\mathcal{M}_1 = (V, \mathcal{I}_1)$ and
$\mathcal{M}_2 = (V, \mathcal{I}_2)$ over the same ground set,
the \emph{weighted matroid intersection}
problem is to find an $S^{*} \in \mathcal{I}_1 \cap \mathcal{I}_2$ maximizing
$w(S^{*})$.
Let $r = \max_{S \in \mathcal{I}_1 \cap \mathcal{I}_2}|S|$.
In this paper, the two matroids are accessed through \emph{rank oracles}, one
for each matroid.
Specifically, let $\rank_1(\cdot)$ and $\rank_2(\cdot)$ denote
the rank functions of $\mathcal{M}_1$ and $\mathcal{M}_2$, respectively.
We assume that given pointers to a linked list containing elements of $S$
(see, e.g.,~\cite{chakrabarty2017}), the rank oracles compute
$\rank_1(S)$ and $\rank_2(S)$ in $O(\trank)$ time.
With the $O(n\sqrt{r}\log{n} \cdot \trank)$ unweighted
matroid intersection algorithm of Chakrabarty et al.~\cite{chakrabarty2019},
we also assume that $\mathcal{M}_1$ and $\mathcal{M}_2$ are of the same rank
and share a common
basis $S^{(0)}$ of size $r$ by adjusting the given rank oracles
properly.\footnote{We can compute $r$ via the unweighted matroid intersection
algorithm and regard all sets of size greater than $r$ as dependent.}
By adding $r$ zero-weight elements to $V$, we may also assume that each common
independent set $S \in \mathcal{I}_1 \cap \mathcal{I}_2$ is contained in a
common basis of the same weight.\footnote{In particular, let
$Z = \{z_1, \ldots, z_r\}$ be the set of newly added zero-weight elements.
For each $i \in \{1, 2\}$, instead of working with $\mathcal{M}_i$, we now
work with $\tilde{\mathcal{M}}_i = (V \cup Z, \tilde{\mathcal{I}}_i)$
such that for each $\tilde{S} \subseteq V \cup Z$,
$\tilde{S} \in \tilde{\mathcal{I}}_i$ if and only if
$|\tilde{S}| \leq r$ and $\tilde{S} \setminus Z \in \mathcal{I}_i$.
This change is reflected in the new rank function
$\tilde{\rank}_i(\tilde{S}) =
\min(\rank_i(\tilde{S} \setminus Z) + |\tilde{S} \cap Z|, r)$,
which can be implemented via the given oracle $\rank_i$.}
Therefore, it suffices to find a common basis $S^{*}$ maximizing $w(S^{*})$.
Note that elements with negative weights can be safely discarded from $V$.

\paragraph{Weight-Splitting.}
For weight function $f: V \to \mathbb{R}$, a basis $S$ of matroid
$\mathcal{M}$ is \emph{$f$-maximum} if $f(S) \geq f(R)$ holds for each basis $R$
of $\mathcal{M}$.
Let $w^{\epsilon} = (w^{\epsilon}_1, w^{\epsilon}_2)$ with
$w^{\epsilon}_i: V \to \mathbb{R}$ being a weight function
for each $i \in \{1, 2\}$.
Let $w^{\epsilon}(x) = w^{\epsilon}_1(x) + w^{\epsilon}_2(x)$.
We say that $w^{\epsilon}$ is an \emph{$\epsilon$-splitting}
(see, e.g.,~\cite{shigeno1995}) of $w$ with $\epsilon > 0$ if
$w(x) \leq w^{\epsilon}(x) \leq w(x) + \epsilon$ holds for each $x \in V$.
If $w^{\epsilon}$ is an $\epsilon$-splitting of $w$ and $S_i$ is a
$w^{\epsilon}_i$-maximum basis of $\mathcal{M}_i$ for each $i \in \{1, 2\}$,
then we call $(w^{\epsilon}, S)$ with $S = (S_1, S_2)$ an
\emph{$\epsilon$-partial-solution} of $w$.
Note that by~\ref{item:downward_closure}, $S_1 \cap S_2$ is a common independent
set.
If $S_1 = S_2$, then $(w^{\epsilon}, S)$ is an \emph{$\epsilon$-solution} of $w$.
In this case, we may abuse notation and refer to $S_1$ as simply $S$.

\paragraph{Matroid Algorithms.}
The unweighted version of the following lemma was shown in
\cite{chakrabarty2019} (it is also mentioned in~\cite{nguyen2019}),
and it was extended to the weighted case implicitly in
\cite{blikstad2021}.

\begin{lemma}[{\cite[Lemma 13]{chakrabarty2019}},
  {\cite{nguyen2019}}, and {\cite{blikstad2021}}]
  For $i \in \{1, 2\}$,
  given $S \in \mathcal{I}_i$, $B \subseteq S$ (respectively,
  $B \subseteq \bar{S}$), $x \in \bar{S}$ (respectively, $x \in S$), and
  weight function $f: V \to \mathbb{R}$, it takes
  $O(\log{|B|} \cdot \trank)$ time to either obtain a $b \in B$
  minimizing/maximizing $f(b)$ such that
  $(S \setminus \{b\}) \cup \{x\} \in \mathcal{I}_i$ (respectively,
  $(S \setminus \{x\}) \cup \{b\} \in \mathcal{I}_i$) or
  report that such an element does not exist in $B$.
  \label{lemma:find_exchange}
\end{lemma}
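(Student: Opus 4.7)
The plan is to set up, for each case, a monotone predicate on prefixes of a sorted version of $B$ whose threshold is precisely the element we want, and then binary-search over it using a single rank query per probe. I would handle the case $B \subseteq S$, $x \in \bar{S}$ first; the dual case $B \subseteq \bar{S}$, $x \in S$ follows by an analogous argument with matroid closure replacing the unique circuit.

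For the first case, I would start with one rank query to test whether $S \cup \{x\} \in \mathcal{I}_i$; if so, every $b \in B$ is feasible and the $f$-extremal element of $B$ is the answer. Otherwise $S \cup \{x\}$ contains a unique circuit $C$ (the standard fact recorded in the preliminaries), with $x \in C$, and feasibility of $b \in B$ is equivalent to $b \in C$. I would then sort $B$ as $b_1, \ldots, b_m$ in the desired direction of $f$ and let $B_k = \{b_1, \ldots, b_k\}$. The key observation is that because $C$ is the \emph{only} circuit in $S \cup \{x\}$, the set $(S \setminus B_k) \cup \{x\}$ is independent iff $B_k \cap C \neq \emptyset$, a predicate monotone in $k$ and decidable by one rank query. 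Binary search locates the minimum such $k$, and by minimality $b_k$ itself is the desired $f$-extremal element of $B \cap C$.

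The dual case reduces to the same skeleton. Setting $S' = S \setminus \{x\}$, a candidate $b \in B$ is feasible iff $b$ lies outside the closure of $S'$ in $\mathcal{M}_i$, which on a sorted prefix $B_k$ is captured by the monotone predicate $\rank_i(S' \cup B_k) > |S'|$, again one rank query per probe. The same binary search returns the answer.

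The main subtlety will be accounting for the total running time rather than just the query count. The binary search uses $O(\log |B|)$ rank queries, hence $O(\log |B| \cdot \trank)$ oracle time, but naively sorting $B$ by $f$ would cost an extra $O(|B| \log |B|)$. I would address this either by assuming that $B$ is maintained in $f$-sorted order by the caller (the setting in which the lemma is invoked in~\cite{chakrabarty2019,nguyen2019}) or by replacing explicit sorting with linear-time median selection and recursing only into the half known to contain a feasible element, so that the additional linear work is absorbed into the per-step $\trank$ cost required anyway for linked-list manipulation.
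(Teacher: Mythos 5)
Your proof is correct and follows the same binary-search-on-$B$-ordered-by-$f$ approach that the paper only sketches (the paper itself defers to \cite{chakrabarty2019,nguyen2019,blikstad2021} and states the idea in one sentence). The monotone predicates you identify — unique-circuit membership detected by testing $(S \setminus B_k) \cup \{x\} \in \mathcal{I}_i$, and $\rank_i(S' \cup B_k) > |S'|$ in the dual case — are exactly what makes one rank query per probe suffice, and your note that $B$ should be maintained pre-sorted by the caller matches the paper's remark about keeping a balanced BST with linked-list pointers.
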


The main idea of \Cref{lemma:find_exchange} is to perform binary search on $B$
ordered by $f$.
Throughout this paper, we will maintain such an ordered set in a balanced
binary search tree where each element holds pointers to its successor and
predecessor and each node holds pointers to the first and the last elements
in its corresponding subtree.
This allows us to perform binary search on the tree and obtain pointers to the
linked list containing elements in a consecutive range efficiently.

The following greedy algorithm for finding a maximum-weight basis is folklore.

\begin{lemma}[See, e.g., \cite{edmonds1971}]
  It takes $O(n\log{n} + n\trank)$ time to obtain a $f$-maximum basis
  $S$ of a given matroid $\mathcal{M}$ and weight function
  $f: V \to \mathbb{R}$.
  \label{lemma:greedy}
\end{lemma}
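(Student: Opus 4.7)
The plan is to implement the textbook matroid greedy algorithm and verify that each step fits within the claimed budget. First, I would sort the ground set $V$ in decreasing order of $f$-weight; using any comparison-based sort this costs $O(n \log n)$ time. I would then scan the sorted sequence and maintain a running independent set $S$, initially empty. To keep the scan compatible with the oracle interface described in the preliminaries, $S$ would be stored as a linked list with constant-time append and delete-last operations. For each element $x$ encountered in order, I would tentatively append $x$ to the list, query $\rank(S \cup \{x\})$ once, and retain $x$ if the returned rank equals $|S| + 1$ (updating $|S|$ in $O(1)$); otherwise I undo the append.

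The complexity analysis is immediate: one rank query per element gives $n$ queries at $O(\trank)$ each, plus the $O(n \log n)$ sorting cost, for a total of $O(n \log n + n \trank)$. The output is the final $S$, which by construction is independent and of maximum possible cardinality (adding any unused element would violate either independence, as detected by the rank test, or the basis-completion property), hence a basis of $\mathcal{M}$.

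Correctness of the greedy output as an $f$-maximum basis is the classical matroid exchange argument, which I would only sketch: writing the greedy basis as $S = \{s_1, \ldots, s_r\}$ in decreasing $f$-order and any other basis as $S^{*} = \{s^{*}_1, \ldots, s^{*}_r\}$ in decreasing $f$-order, one applies axiom~\ref{item:downward_closure} together with axiom~M2 to the prefixes $\{s_1, \ldots, s_{i-1}\}$ and $\{s^{*}_1, \ldots, s^{*}_i\}$ to exhibit an element that the greedy would have considered no later than $s^{*}_i$; this forces $f(s_i) \geq f(s^{*}_i)$ for every $i$, so $f(S) \geq f(S^{*})$.

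There is essentially no obstacle — the statement is folklore. The only subtlety worth flagging is that the running-time bound is contingent on each rank call actually costing $O(\trank)$, which in the paper's model requires handing the oracle a linked list of the current set; this is why I insist on maintaining $S$ itself as a linked list and on the append/delete-last pattern above, rather than reconstructing a list from scratch for each query.
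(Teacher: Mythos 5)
Your proof is correct and is precisely the standard matroid greedy argument; the paper itself states \Cref{lemma:greedy} as folklore (citing Edmonds) and gives no proof, so there is nothing to compare against beyond confirming that your approach is the intended one. Your attention to maintaining $S$ as a linked list so each rank query costs $O(\trank)$ under the paper's oracle model is the right reading of the preliminaries, and your sketch of the exchange argument for $f$-maximality is the classical one.
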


\subsection{The Framework}

The core of our algorithm is the following subroutine.

\begin{theorem}
  Given a $2\epsilon$-solution $(w^{2\epsilon}, S^\prime)$ of $w$,
  it takes $O(nr^{3/4}\log{n} \cdot \trank)$ time to obtain
  an $\epsilon$-solution $(w^{\epsilon}, S)$.
  \label{thm:main}
\end{theorem}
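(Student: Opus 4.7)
The plan is to follow the Fujishige-Zhang / Shigeno-Iwata scaling framework: refine a $2\epsilon$-optimal weight-splitting into an $\epsilon$-optimal one by performing a bounded number of shortest-path augmentations in a weighted exchange graph. Since the input already satisfies $S_1 = S_2 = S'$, the real work is to re-establish this equality after the precision is tightened.

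First I would derive an $\epsilon$-partial-solution $(w^{\epsilon}, (S_1, S_2))$ from $(w^{2\epsilon}, S')$. The idea is to construct $w^{\epsilon}_i$ from $w^{2\epsilon}_i$ by adjusting individual element weights at the finer scale (for instance, rounding at multiples of $\epsilon$) so that the $\epsilon$-splitting inequalities $w(x) \leq w^{\epsilon}(x) \leq w(x) + \epsilon$ still hold. Then \Cref{lemma:greedy} applied to each $\mathcal{M}_i$ with weights $w^{\epsilon}_i$ gives a $w^{\epsilon}_i$-maximum basis $S_i$ in $\tilde{O}(n \cdot \trank)$ time. A standard weight-swap argument, comparing $w^{\epsilon}(S_i)$ to $w^{\epsilon}(S')$ using the fact that $S'$ was $w^{2\epsilon}_i$-maximum and that $w^{\epsilon}_i$ differs from $w^{2\epsilon}_i$ by at most $\epsilon$ per element, yields $|S' \triangle S_i| = O(\sqrt{r})$, and hence $|S_1 \triangle S_2| = O(\sqrt{r})$.

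Next I would eliminate the symmetric difference via $O(\sqrt{r})$ augmentations. For each augmentation, form the weighted exchange graph whose arcs correspond to valid single-element swaps with costs drawn from $w^{\epsilon}_1 - w^{\epsilon}_2$, compute a shortest source-to-sink path from $S_1 \setminus S_2$ to $S_2 \setminus S_1$ using node potentials that encode the current $w^{\epsilon}_i$-deficits, and flip the arcs along that path. A classical exchange-graph argument guarantees that augmenting along a shortest path preserves $w^{\epsilon}_i$-maximality of each $S_i$ while reducing $|S_1 \triangle S_2|$ by exactly $2$. After $O(\sqrt{r})$ such rounds, $S_1 = S_2$ and we have obtained an $\epsilon$-solution.

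The main obstacle, and what the paper advertises as its central contribution, is computing each shortest-path tree in $\tilde{O}(n r^{1/4} \cdot \trank)$ time: an explicit weighted exchange graph already contains $\Theta(nr)$ arcs, so a naive Dijkstra is already quadratic. The plan is to access exchange arcs only implicitly through the binary-search primitive of \Cref{lemma:find_exchange}, maintaining elements in balanced search trees ordered by the current potential-adjusted arc costs, and grouping elements by their tentative distance labels so that only $\tilde{O}(r^{1/4})$ probes are charged to each element across the entire BFS. Combining $O(\sqrt{r})$ augmentations with $\tilde{O}(n r^{1/4} \cdot \trank)$ per shortest-path tree yields the target $\tilde{O}(n r^{3/4} \cdot \trank)$ bound; the delicate part will be to amortize the probe cost so that it is distributed uniformly across the $n$ elements rather than concentrated on a handful of high-degree vertices in the exchange graph.
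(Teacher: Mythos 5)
Your high-level decomposition (initial bases with bounded symmetric difference, then shortest-path augmentations in a weighted exchange graph) matches the paper's, but both halves of your quantitative plan break, and they break in a way that would not combine to give the claimed bound.

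\paragraph{The initial phase.} You propose to obtain $(w^{\epsilon}, (S_1,S_2))$ by rounding $w^{2\epsilon}$ at the finer scale and running the greedy algorithm in each matroid, and you claim a ``standard weight-swap argument'' gives $|S' \triangle S_i| = O(\sqrt r)$. There is no such argument: rounding the splitting and re-optimizing the greedy basis gives \emph{no} a priori bound on the symmetric difference, and this is precisely why the paper does not do this. What the paper actually uses (\Cref{lemma:weight_adjustment}) is the Shigeno--Iwata iterative weight-adjustment procedure, which repeatedly perturbs $w^{\epsilon}_1(x)$, $w^{\epsilon}_2(x)$ for an element $x \in S_1 \setminus S_2$ and performs a single exchange, with a per-element counter $p(x)$ capped at $k$. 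The bound $|S_1 \setminus S_2| \leq 2r/k$ comes from the potential argument in \Cref{lemma:large_intersection}, which compares $p(S_1) - p(S_2)$ against the $2\epsilon$-optimality of $S'$; it is an amortized argument over the whole adjustment process, not a one-shot swap argument. Without this you have no control over the number of augmentations.

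\paragraph{The shortest-path phase.} You claim each shortest-path tree can be computed in $\tilde O(n r^{1/4} \cdot \trank)$ queries. The paper's \Cref{lemma:dijkstra} achieves $\tilde O(n\sqrt r \cdot \trank)$, and that is not an artifact of presentation: the two-buffer Dijkstra trades off the cost of clearing a buffer ($\tilde O(n)$ or $\tilde O(r)$ work, done $O(r/b)$ or $O(n/b)$ times for buffer size $b$) against the cost of the per-iteration buffered probes ($\tilde O(b)$ per visited vertex over $n$ iterations). Setting $b = \sqrt r$ balances these at $\tilde O(n\sqrt r)$; pushing $b$ up to $r^{3/4}$ to reduce clearings to $O(r^{1/4})$ blows the per-iteration probe cost up to $\tilde O(n r^{3/4})$ for a \emph{single} Dijkstra run, which is already the entire budget. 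Your description (``grouping elements by their tentative distance labels so that only $\tilde O(r^{1/4})$ probes are charged to each element'') does not explain how to beat this trade-off, and nothing in the paper suggests it can be beaten.

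\paragraph{The balance.} Because of the two points above, your budget of $O(\sqrt r)$ augmentations at $\tilde O(n r^{1/4})$ each is not the right split. The paper runs weight adjustment with $k = r^{3/4}$, paying $\tilde O(n r^{3/4})$ up front to reduce the symmetric difference to $O(r^{1/4})$, and then performs $O(r^{1/4})$ augmentations each costing $\tilde O(n\sqrt r)$, for a total of $\tilde O(n r^{3/4})$. Both the exponent in $k$ and the exponent in the Dijkstra cost are forced by the respective sub-arguments; getting either one wrong by a constant factor in the exponent breaks the final bound.
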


With \Cref{thm:main}, the weighted matroid intersection algorithm follows
from the standard weight-scaling framework
(see, e.g, \cite{fujishige1995,shigeno1995}).
Recall that our goal is to find a maximum-weight common basis.

\begin{theorem}[Weighted Matroid Intersection]
  Given two matroids $\mathcal{M}_1 = (V, \mathcal{I}_1)$ and
  $\mathcal{M}_2 = (V, \mathcal{I}_2)$, it takes
  $O(nr^{3/4}\log{n}\log{(rW)} \cdot \trank)$ time to
  obtain an $S^{*} \in \mathcal{I}_1 \cap \mathcal{I}_2$ maximizing $w(S^{*})$.
  \label{thm:weighted_matroid_intersection}
\end{theorem}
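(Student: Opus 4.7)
I would apply the standard weight-scaling framework of~\cite{fujishige1995,shigeno1995}, driving \Cref{thm:main} as a halving subroutine. Let $\epsilon_0$ be the smallest power of two with $\epsilon_0 \geq 2W$, and set $\epsilon_k = \epsilon_0 / 2^k$. The algorithm maintains an $\epsilon_k$-solution of $w$ for $k = 0, 1, \ldots, K$, where $K$ is the least integer with $\epsilon_K < 1/r$. Since $\epsilon_0 = O(W)$, we have $K = O(\log(rW))$.

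To bootstrap, I use the common basis $S^{(0)}$ provided by the preliminary reductions in \Cref{sec:prelim}. I would define $w^{\epsilon_0}_1(x) = \epsilon_0/2$ if $x \in S^{(0)}$ and $0$ otherwise, and $w^{\epsilon_0}_2(x) = w(x) + \epsilon_0/2$ if $x \in S^{(0)}$ and $w^{\epsilon_0}_2(x) = w(x)$ otherwise. Since we may assume $w(x) \geq 0$, this is an $\epsilon_0$-splitting (the sum equals $w(x) + \epsilon_0$ on $S^{(0)}$ and $w(x)$ off $S^{(0)}$). Because every $x \in S^{(0)}$ beats every $y \notin S^{(0)}$ by $\epsilon_0/2$ under $w^{\epsilon_0}_1$ and by at least $\epsilon_0/2 - (w(y) - w(x)) \geq \epsilon_0/2 - W \geq 0$ under $w^{\epsilon_0}_2$, the basis $S^{(0)}$ is $w^{\epsilon_0}_i$-maximum in $\mathcal{M}_i$ for both $i \in \{1,2\}$. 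Hence $(w^{\epsilon_0}, S^{(0)})$ is a valid $\epsilon_0$-solution.

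Starting from this, I would iterate \Cref{thm:main} for $K$ steps; each call converts an $\epsilon_{k-1} = 2\epsilon_k$-solution into an $\epsilon_k$-solution in $O(nr^{3/4}\log n \cdot \trank)$ time. Let $(w^{\epsilon_K}, S)$ denote the final $\epsilon_K$-solution. For any common basis $S^{*}$, the fact that $S$ is $w^{\epsilon_K}_i$-maximum in $\mathcal{M}_i$ gives $w^{\epsilon_K}_i(S^*) \leq w^{\epsilon_K}_i(S)$ for each $i$; summing over $i$ and sandwiching with the splitting bounds yields
\[
w(S^*) \;\leq\; w^{\epsilon_K}(S^*) \;\leq\; w^{\epsilon_K}(S) \;\leq\; w(S) + r\epsilon_K \;<\; w(S) + 1.
\]
Integrality of $w$ then forces $w(S^*) \leq w(S)$, so $S$ is a maximum-weight common basis, which (by the padding with $Z$ described in \Cref{sec:prelim}) restricts to a maximum-weight common independent set of the original matroids.

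The total running time is $K \cdot O(nr^{3/4}\log n \cdot \trank) = O(nr^{3/4}\log n \log(rW) \cdot \trank)$, plus the absorbed one-time $O(n\sqrt{r}\log n \cdot \trank)$ cost of obtaining $S^{(0)}$ via~\cite{chakrabarty2019}. The substantive difficulty of this theorem lives entirely inside \Cref{thm:main}; everything else is routine bookkeeping in the scaling loop, so the main care points are just picking an initial splitting that simultaneously makes $S^{(0)}$ maximum in both matroids and verifying at termination that $r\epsilon_K < 1$ turns approximate optimality into exact optimality.
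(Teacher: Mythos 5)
Your proof is correct and follows essentially the same weight-scaling framework as the paper: bootstrap an $O(W)$-solution from the common basis $S^{(0)}$, halve $\epsilon$ via \Cref{thm:main} for $O(\log(rW))$ rounds until $r\epsilon < 1$, then invoke integrality. The only notable deviation is the choice of the initial splitting; the paper uses the even simpler constant splitting $w^{W}_1 = w^{W}_2 \equiv W/2$, under which \emph{every} basis is trivially $w^{W}_i$-maximum (so $S^{(0)}$ is automatically a $W$-solution), whereas your indicator-based splitting needs the short extra argument you supply about $S^{(0)}$ dominating all other elements under both $w^{\epsilon_0}_1$ and $w^{\epsilon_0}_2$.
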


\begin{proof}
  Let $w^{W} = (w^{W}_1, w^{W}_2)$ with $w^{W}_i(x) = \frac{W}{2}$ for
  each $x \in V$
  and the initial common basis $S^{(0)}$ obtained via the unweighted
  matroid intersection algorithm be a $W$-solution of $w$.
  Repeatedly apply \Cref{thm:main} for $O(\log{rW})$ iterations to obtain a
  $\frac{1}{2r}$-solution $(w^{\frac{1}{2r}}, S^{*})$.
  For each $S \in \mathcal{I}_1 \cap \mathcal{I}_2$, we have
  \[ w(S) \leq w^{\frac{1}{2r}}(S) \leq w^{\frac{1}{2r}}(S^{*}) \leq w(S^{*}) + r \cdot \frac{1}{2r} < w(S^{*}) + 1. \]
  Since $w(S)$ and $w(S^{*})$ are integers, $S^{*}$ is a maximum-weight common
  basis.
  The algorithm runs in $O(nr^{3/4}\log{n}\log{(rW)} \cdot \trank)$ time.
  The theorem is proved.
\end{proof}

The rest of the paper proves \Cref{thm:main}.

\section{The Algorithm}

As in~\cite{fujishige1995} and~\cite{shigeno1995},
the algorithm of \Cref{thm:main} consists of the following two parts.

\subsection{Weight Adjustment}\label{sec:adjustment}

The first part of the algorithm is the following subroutine which computes two
bases $S_1$ and $S_2$ with a large enough intersection.
This part is essentially the same as Shigeno and Iwata's algorithm
\cite{shigeno1995},
except that we replace the fundamental (co-)circuit queries in it with calls to
\Cref{lemma:find_exchange}.

\begin{lemma}
  Given a $2\epsilon$-solution $(w^{2\epsilon}, S^\prime)$ and a parameter
  $1 \leq k \leq r$,
  it takes $O(nk\log{n} \cdot \trank)$ time to obtain an
  $\epsilon$-partial-solution $(w^{\epsilon}, S)$ with
  $|S_1 \cap S_2| \geq \left(1 - \frac{O(1)}{k}\right)r$.
  \label{lemma:weight_adjustment}
\end{lemma}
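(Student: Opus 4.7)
The plan is to port Shigeno and Iwata's auction-style weight-adjustment procedure~\cite{shigeno1995} to the rank-oracle setting, using \Cref{lemma:find_exchange} as a drop-in replacement for their fundamental-circuit queries.

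The first step is initialization: starting from the $2\epsilon$-splitting $w^{2\epsilon}$, I would set $w^{\epsilon}$ to a valid $\epsilon$-splitting of $w$ that agrees with $w^{2\epsilon}$ on most elements (for instance by a uniform downward shift of magnitude at most $\epsilon$), put $S_1 = S_2 = S'$ as the working bases, and then restore the $w^{\epsilon}_i$-maximum property via the greedy algorithm of \Cref{lemma:greedy}. Throughout, I maintain $S_1$ and $S_2$ as balanced binary search trees keyed by $w^{\epsilon}_1$ and $w^{\epsilon}_2$, respectively, so that each subsequent invocation of \Cref{lemma:find_exchange} costs $O(\log n \cdot \trank)$ rank queries. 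This initialization costs $O(n\log n \cdot \trank)$ time.

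The main loop consists of $\Theta(k)$ auction rounds. In each round, I sweep the current symmetric difference $S_1 \triangle S_2$: for every $x \in S_1 \setminus S_2$, I invoke \Cref{lemma:find_exchange} on the pair $(S_2, x)$ to locate, in $O(\log n \cdot \trank)$ time, the element $y \in S_2$ minimizing $w^{\epsilon}_2(y)$ subject to $(S_2 \setminus \{y\}) \cup \{x\} \in \mathcal{I}_2$; then transfer a controlled amount of weight from $w^{\epsilon}_1(x)$ to $w^{\epsilon}_2(x)$ (so that the sum $w^{\epsilon}$ is unchanged and remains an $\epsilon$-splitting) in order to make $x$ at least as attractive as $y$ under $w^{\epsilon}_2$, and perform the swap $S_2 \leftarrow (S_2 \setminus \{y\}) \cup \{x\}$. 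The symmetric step handles $x \in S_2 \setminus S_1$. Since each swap costs $O(\log n \cdot \trank)$ and each round performs $O(n)$ swaps, the total cost is $O(nk \log n \cdot \trank)$, matching the claimed bound.

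For correctness and the intersection bound I would reuse the Shigeno--Iwata potential argument: the per-element budget for weight transfers is capped at $\epsilon$ (otherwise $w^{\epsilon}$ would cease to be an $\epsilon$-splitting), and each element of $S_1 \triangle S_2$ that persists across a round charges a quantum of roughly $\epsilon/k$ to its budget; after $\Theta(k)$ rounds any surviving element has exhausted its budget, so only $O(r/k)$ elements can remain in the symmetric difference, yielding $|S_1 \cap S_2| \geq (1 - O(1)/k)r$. The main obstacle I anticipate is verifying that the substitution of binary-search exchanges for explicit fundamental-circuit queries does not upset the Shigeno--Iwata invariants; however, \Cref{lemma:find_exchange} returns exactly the $w^{\epsilon}_i$-extremal exchange partner — precisely the element a fundamental-circuit inspection would select — so the original analysis transfers essentially verbatim, and the only new ingredient is the $O(\log n)$ oracle overhead per exchange, which is absorbed into the final bound.
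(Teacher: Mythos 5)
Your algorithmic outline matches the paper's: initialize an $\epsilon$-splitting from $w^{2\epsilon}$, compute $w^{\epsilon}_i$-maximum bases greedily, and then run Shigeno--Iwata-style weight adjustments where each adjustment finds its extremal exchange partner via the binary search of \Cref{lemma:find_exchange} over a BST keyed by $w^{\epsilon}_i$, for $O(nk)$ adjustments at $O(\log n\cdot\trank)$ each. That is exactly the paper's implementation and time analysis.

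However, your justification of the intersection bound --- the quantitative heart of the lemma --- is wrong as stated. You claim the per-element transfer budget is capped at $\epsilon$ ``otherwise $w^{\epsilon}$ would cease to be an $\epsilon$-splitting,'' with each surviving round charging a quantum of $\epsilon/k$. Neither is how the argument works. The splitting condition constrains only the sum $w^{\epsilon}_1(x)+w^{\epsilon}_2(x)$, so one can (and the algorithm does) shift up to $k\epsilon$ of weight from $w^{\epsilon}_1(x)$ to $w^{\epsilon}_2(x)$ while the sum merely oscillates within $[w(x),w(x)+\epsilon]$; the cap of $k$ adjustments per element is imposed by the algorithm's counter $p(x)$, not by splitting validity. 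The quantum per adjustment is $\epsilon$, not $\epsilon/k$, and the bound $|S_1\setminus S_2|\le 2r/k$ does not follow from per-element budgets at all: it follows from a \emph{global} potential argument (\Cref{lemma:large_intersection}) showing $p(S_1\setminus S_2)\cdot\epsilon\le 2\epsilon r$, which crucially uses the $w^{\epsilon}_i$-maximality of the current $S_i$, the $2\epsilon$-solution property of $S'$, and the fact that $p(S_2\setminus S_1)=0$ because the procedure is deliberately one-sided (only $x\in S_1\setminus S_2$ is ever adjusted; no element ever enters $S_2\setminus S_1$). Your proposed ``symmetric step'' for $x\in S_2\setminus S_1$ is both unnecessary (since $S_1,S_2$ are bases, $|S_1\setminus S_2|=|S_2\setminus S_1|$) and harmful, as it would destroy the invariant $p(S_2\setminus S_1)=0$ on which the potential argument rests. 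With your accounting (total budget $O(\epsilon r)$, each survivor charged only $\epsilon$) you would obtain $O(r)$ surviving elements rather than $O(r/k)$, so the claimed bound would not follow.
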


Since the algorithm and analysis are essentially the same as in
\cite{shigeno1995},
here we only describe how we can obtain $S_1$ and $S_2$ in the desired time
bound.
Please refer to~\cite{shigeno1995} or \Cref{lemma:large_intersection}
in \Cref{appendix:omitted_proofs} for the proof of
$|S_1 \cap S_2| \geq \left(1 - \frac{O(1)}{k}\right)r$.

\paragraph{Algorithm of~\Cref{lemma:weight_adjustment}.}
Let $w^{\epsilon} = (w^{2\epsilon}_1, w - w^{2\epsilon}_1 + \epsilon)$ be
the initial $\epsilon$-splitting and $S_i$ be the $w^{\epsilon}_i$-maximum
basis of $\mathcal{M}_i$ obtained by \Cref{lemma:greedy} in
$O(n\log{n} + n\trank)$ time for each $i \in \{1, 2\}$.
Let $p(x) = 0$ for each $x \in V$.
Repeat the following \emph{weight adjustment} for an arbitrary
$x \in S_1 \setminus S_2$ with $p(x) < k$ until such an $x$ becomes
non-existent.
\begin{itemize}
\item If $w^{\epsilon}(x) = w(x) + \epsilon$, then set
  $w^{\epsilon}_1(x) \gets w^{\epsilon}_1(x) - \epsilon$.
  Apply \Cref{lemma:find_exchange} to obtain a $y \in V \setminus S_1$
  maximizing $w^{\epsilon}_1(y)$ such that
  $(S_1 \setminus \{x\}) \cup \{y\} \in \mathcal{I}_1$.
  If $w^{\epsilon}_1(x) < w^{\epsilon}_1(y)$, then set
  $S_1 \gets (S_1 \setminus \{x\}) \cup \{y\}$.
\item Otherwise, set $p(x) \gets p(x) + 1$ and
  $w^{\epsilon}_2(x) \gets w^{\epsilon}_2(x) + \epsilon$.
  Apply \Cref{lemma:find_exchange} to obtain a $y \in S_2$ minimizing
  $w^{\epsilon}_2(y)$ such that
  $(S_2 \setminus \{y\}) \cup \{x\} \in \mathcal{I}_2$.
  If $w^{\epsilon}_2(x) > w^{\epsilon}_2(y)$, then set
  $S_2 \gets (S_2 \setminus \{y\}) \cup \{x\}$.
\end{itemize}
Since $p(x)$ is only incremented when $x \in S_1 \setminus S_2$, we have
$p(x) \leq k$ for each $x \in V$ when the procedure terminates.
Apparently, $w^{\epsilon}(x)$ oscillates between $w(x)$ and $w(x) + \epsilon$,
and thus the number of weight adjustments for $x$ is bounded by $2p(x)$.
We also have that $S_i$ remains $w^{\epsilon}_i$-maximum for each
$i \in \{1, 2\}$ due to the potential exchange of $x$ and $y$ after the
adjustment.
Each weight adjustment takes $O(\trank\log{n})$ time by
\Cref{lemma:find_exchange}, hence the total running time is
$O(nk\log{n} \cdot \trank)$.

\subsection{Augmentation}\label{sec:augmentation}

With $S_1$ and $S_2$ obtained from \Cref{lemma:weight_adjustment}, we then run
``few'' augmentations to make these two bases equal.
To do so, we need the following notion of exchange graphs, which is slightly
different compared to previous algorithms for unweighted matroid intersection
(e.g., \cite{blikstad2021,chakrabarty2019,cunningham1986,lawler1975}).

\paragraph{Exchange Graph.}
Let $(w^{\epsilon}, S)$ be an $\epsilon$-partial-solution of $w$ with
$S_1 \neq S_2$.
The \emph{exchange graph} with respect to $(w^{\epsilon}, S)$ is a
weighted directed multi-graph $G_{w^{\epsilon}, S} = (V \cup \{s, t\}, E)$
with $s, t \not\in V$ and $E = E_1 \cup E_2 \cup E_s \cup E_t$, where

\begin{align*}
  E_1 &= \{(x, y, w^{\epsilon}_1(x) - w^{\epsilon}_1(y))
    \mid x \in S_1, y \not\in S_1,\;\text{and}\;(S_1 \setminus \{x\}) \cup \{y\} \in \mathcal{I}_1\}, \\
  E_2 &= \{(y, x, w^{\epsilon}_2(x) - w^{\epsilon}_2(y))
    \mid x \in S_2, y \not\in S_2,\;\text{and}\;(S_2 \setminus \{x\}) \cup \{y\} \in \mathcal{I}_2\}, \\
  E_s &= \{(s, x, 0) \mid x \in S_1 \setminus S_2\},\;\text{and} \\
  E_t &= \{(x, t, 0) \mid x \in S_2 \setminus S_1\}.
\end{align*}

Since $S_i$ is $w^{\epsilon}_i$-maximum for each $i \in \{1, 2\}$, all edge
weights are non-negative.
Note that this definition of exchange graph is a simplified
version of the \emph{auxiliary graph} defined by Fujishige and Zhang
\cite{fujishige1995}
to solve the more generalized \emph{independent assignment}
problem\footnote{Specifically, given a bipartite graph $G = (V_1 \cup V_2, E)$
with $V_1$ and $V_2$ being copies of $V$ and two matroids
$\mathcal{M}_1 = (V, \mathcal{I}_1)$,
$\mathcal{M}_2 = (V, \mathcal{I}_2)$ on $V$, the independent assignment problem
aims to find the largest $S_1 \in \mathcal{I}_1$ and $S_2 \in \mathcal{I}_2$
such that $G$ admits a perfect matching between $S_1 \subseteq V_1$
and $S_2 \subseteq V_2$.
Analogously, the weighted version of the problem wants to find $S_1$ and $S_2$
such that the weight of the
maximum-weight perfect matching between $S_1$ and $S_2$ is maximized.
Clearly, the (weighted) matroid intersection problem is a special case of the
(weighted) independence assignment problem with $E = \{(v, v) \mid v \in V\}$.}.
We have the following properties of the exchange graph, for which we also
provide simplified and more direct proofs for self-containedness in
\Cref{appendix:omitted_proofs}.

\begin{lemma}[\cite{fujishige1995}; See \Cref{appendix:omitted_proofs}]
  $G_{w^{\epsilon}, S}$ admits an $st$-path.
  \label{lemma:has_path}
\end{lemma}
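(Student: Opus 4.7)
The plan is a proof by contradiction: assume $G_{w^{\epsilon}, S}$ has no $st$-path and let $T \subseteq V$ denote the set of vertices of $V$ reachable from $s$. The edges in $E_s$ force $S_1 \setminus S_2 \subseteq T$, and the absence of an $st$-path together with $E_t$ forces $(S_2 \setminus S_1) \cap T = \emptyset$. In particular $S_2 \cap T = S_1 \cap S_2 \cap T$, while $S_1 \cap T$ additionally contains all of $S_1 \setminus S_2$, which is non-empty since $S_1 \neq S_2$ and $|S_1| = |S_2| = r$. This gives the key inequality $|S_1 \cap T| - |S_2 \cap T| \geq |S_1 \setminus S_2| > 0$.

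The crux is to read off two rank identities from the reachability closure of $T$. For any $y \in T \setminus S_2$ and $x \in S_2 \setminus T$, no edge $(y, x)$ can lie in $E_2$ (otherwise $x$ would be reachable), hence $(S_2 \setminus \{x\}) \cup \{y\} \notin \mathcal{I}_2$. Since $S_2$ is a basis, $S_2 \cup \{y\}$ contains a unique circuit $C_y$ through $y$, and for every $x \in C_y \cap S_2$ the fundamental circuit-exchange would yield $(S_2 \setminus \{x\}) \cup \{y\} \in \mathcal{I}_2$. Hence $C_y \cap (S_2 \setminus T) = \emptyset$, so $C_y \subseteq (S_2 \cap T) \cup \{y\}$, proving $\rank_2(T) = |S_2 \cap T|$. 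The entirely symmetric argument using $E_1$, applied to $y \in (V \setminus T) \setminus S_1$ and $x \in S_1 \cap T$, yields $\rank_1(V \setminus T) = |S_1 \setminus T|$.

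To finish, I invoke the easy direction of the matroid intersection min-max. The common basis $S^{(0)}$ from \Cref{sec:prelim} has size $r$ and is a common independent set, so $r = |S^{(0)}| \leq \rank_1(V \setminus T) + \rank_2(T) = |S_1 \setminus T| + |S_2 \cap T|$. Using $|S_1| = r = |S_1 \cap T| + |S_1 \setminus T|$, this rearranges to $|S_2 \cap T| \geq |S_1 \cap T|$, contradicting the strict inequality established in the first paragraph. The main subtlety I anticipate is stating the two rank identities cleanly; once they are in place, the min-max bound closes the argument by counting. It is essential that each $S_i$ is already a \emph{basis}, so that adjoining any outside element produces exactly one circuit through it, whose location is then tightly pinned down by the non-existence of the forbidden exchange edge.
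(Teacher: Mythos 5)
Your proof is correct, but it takes a genuinely different route from the paper's. The paper reduces to the \emph{unweighted} exchange graph $\tilde{G}_{\tilde{S}}$ for $\tilde{S} = S_1 \cap S_2$ (which has $|\tilde{S}| < r$), invokes Lawler's result (\Cref{lemma:has_path_unweighted}) that this graph has an $st$-path, and then transplants a suitable sub-path into $G_{w^{\epsilon},S}$ using two structural claims: \Cref{claim:still_edge} (a circuit of $\tilde{S} \cup \{y\}$ survives as the unique circuit of $S_i \cup \{y\}$, so exchange edges persist) and \Cref{claim:extend} (strong basis exchange, to attach the endpoints to $S_1 \setminus S_2$ and $S_2 \setminus S_1$). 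You instead argue by contradiction directly on $G_{w^{\epsilon},S}$: the reachability cut $T$ satisfies $S_1 \setminus S_2 \subseteq T$ and $(S_2 \setminus S_1) \cap T = \emptyset$, the closure of $T$ under $E_2$ (respectively of $V \setminus T$ under $E_1$) pins the fundamental circuit of each $y \in T \setminus S_2$ inside $(S_2 \cap T) \cup \{y\}$ (respectively of each $y \in (V\setminus T)\setminus S_1$ inside $(S_1 \setminus T) \cup \{y\}$), giving $\rank_2(T) = |S_2 \cap T|$ and $\rank_1(V \setminus T) = |S_1 \setminus T|$, and the easy direction of the min--max formula applied to the size-$r$ common independent set then forces $|S_1 \cap T| \leq |S_2 \cap T|$, contradicting $|S_1 \cap T| \geq |S_2 \cap T| + |S_1 \setminus S_2| > |S_2 \cap T|$. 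All steps check out: $S_1$ and $S_2$ are genuinely bases of rank $r$ here, so every $y$ outside $S_i$ has a unique fundamental circuit, and the circuit-exchange characterization you invoke is exactly the fact the paper itself uses in \Cref{claim:still_edge}. In effect you have inlined (a weighted-graph version of) the proof of \Cref{lemma:has_path_unweighted} rather than citing it and transferring the path; this buys a shorter, self-contained argument that avoids \Cref{claim:extend} and the case analysis on path endpoints, at the cost of importing the weak duality bound $|J| \leq \rank_1(A) + \rank_2(V \setminus A)$, which is elementary. Note that your argument, unlike the paper's, does not produce a \emph{short} path or any explicit path, but the lemma only asserts existence, so this is immaterial.
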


Let $d(x)$ be the $sx$-distance in $G_{w^{\epsilon}, S}$ for each $x \in V$
(set $d(x)$ to a large number if $x$ is unreachable from $s$; see
\Cref{sec:bounding} for the exact value) and
$P$ be the shortest $st$-path with the least number of edges.

\begin{lemma}[\cite{fujishige1995}; See \Cref{appendix:omitted_proofs}]
  $\hat{S}_1 = (S_1 \setminus \tail(P \cap E_1)) \cup \head(P \cap E_1)$
  and
  $\hat{S}_2 = (S_2 \setminus \head(P \cap E_2)) \cup \tail(P \cap E_2)$
  are a $\hat{w}^{\epsilon}_1$-maximum and $\hat{w}^{\epsilon}_2$-maximum
  basis of $\mathcal{M}_1$ and $\mathcal{M}_2$, respectively, where
  $\hat{w}^{\epsilon}_1(x) = w^{\epsilon}_1(x) + d(x)$ and
  $\hat{w}^{\epsilon}_2(x) = w^{\epsilon}_2(x) - d(x)$ for each $x \in V$.
  In particular, $(\hat{w}^{\epsilon}, \hat{S})$ with
  $\hat{w}^{\epsilon} = (\hat{w}^{\epsilon}_1, \hat{w}^{\epsilon}_2)$
  and $\hat{S} = (\hat{S}_1, \hat{S}_2)$ is an $\epsilon$-partial-solution.
  Moreover, we have $|\hat{S}_1 \cap \hat{S}_2| > |S_1 \cap S_2|$.
  \label{lemma:augment}
\end{lemma}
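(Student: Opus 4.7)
The statement packages four claims, which I would establish in the order: (a) $\hat{w}^\epsilon$ is an $\epsilon$-splitting of $w$, (b) $\hat{S}_1$ and $\hat{S}_2$ are bases of $\mathcal{M}_1$ and $\mathcal{M}_2$ respectively, (c) each $\hat{S}_i$ is $\hat{w}^\epsilon_i$-maximum, and (d) $|\hat{S}_1 \cap \hat{S}_2| > |S_1 \cap S_2|$. Parts (a), (b), and (d) are largely bookkeeping, while (c) is the technical heart.

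For (a), I would simply observe that $\hat{w}^\epsilon_1(x) + \hat{w}^\epsilon_2(x) = w^\epsilon_1(x) + w^\epsilon_2(x)$ for every $x \in V$, so the splitting bounds $w(x) \leq w^\epsilon(x) \leq w(x) + \epsilon$ pass directly to $\hat{w}^\epsilon$. For (b), the $E_1$-edges of $P$ pair each element of $\tail(P \cap E_1)$ with one of $\head(P \cap E_1)$, so $|\hat{S}_1| = |S_1| = r$, and likewise for $\hat{S}_2$; independence follows from the standard ``unique perfect matching'' lemma for matroid exchange. The key point is that the shortest-with-fewest-edges property of $P$ forbids any shortcut $E_1$-edge among the path vertices---any such shortcut would either strictly shorten $P$ or reduce its edge count---so the pairing induced by $P \cap E_1$ is the unique perfect matching in the relevant induced subgraph, and performing all its swaps simultaneously preserves $\mathcal{I}_1$. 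The argument for $\hat{S}_2$ is symmetric. For (d), a short case analysis on interior-vertex edge types shows that two consecutive $E_1$-edges or two consecutive $E_2$-edges at an interior vertex force set-membership contradictions; thus $E_1$ and $E_2$ alternate along $P$, and each interior vertex either enters $\hat{S}_1 \cap \hat{S}_2$ from outside $S_1 \cup S_2$ or leaves $S_1 \cap S_2$ entirely. A clean accounting, including the contributions of the two path endpoints adjacent to $s$ and $t$, yields $|\hat{S}_1 \cap \hat{S}_2| = |S_1 \cap S_2| + 1$.

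The main obstacle is (c), for which I would invoke the greedy-optimality characterization of $f$-maximum bases: it suffices to show that for every $x \in \hat{S}_i$ and $y \notin \hat{S}_i$ with $(\hat{S}_i \setminus \{x\}) \cup \{y\} \in \mathcal{I}_i$, one has $\hat{w}^\epsilon_i(x) \geq \hat{w}^\epsilon_i(y)$. Unpacking definitions for $i = 1$, this reduces to $d(y) - d(x) \leq w^\epsilon_1(x) - w^\epsilon_1(y)$, which is exactly the shortest-path triangle inequality for an $E_1$-edge $(x, y)$ in $G_{w^\epsilon, S}$. The difficulty is that $(x, y)$ is known to be a valid $\mathcal{M}_1$-exchange for $\hat{S}_1$ but need not correspond to an actual edge of the old exchange graph. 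My plan is to argue, using the unique $\mathcal{M}_1$-circuit of $\hat{S}_1 \cup \{y\}$ together with the path-induced pairing between $\tail(P \cap E_1)$ and $\head(P \cap E_1)$, that either $(x, y)$ is already an edge of $G_{w^\epsilon, S}$ (and the inequality is immediate from $d$ being a distance), or there exists an alternative ``witness'' old $E_1$-edge whose non-negative reduced cost, combined with the zero reduced costs along $P$, yields the same bound. The shortest-with-fewest-edges tie-breaking of $P$ is essential here, as it rules out circuit reroutings that would otherwise produce negative reduced-cost exchanges. The argument for $\mathcal{M}_2$ is symmetric, with the roles of heads and tails and the sign of $d$ flipped in $\hat{w}^\epsilon_2 = w^\epsilon_2 - d$.
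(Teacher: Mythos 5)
Your decomposition into (a)--(d) matches the paper's, and (a) and (d) are handled correctly (the paper's version of (d) is a four-way parity case analysis that always yields $|\hat{S}_1 \cap \hat{S}_2| = |S_1 \cap S_2| + 1$, as you conjecture). The technical issues are in (b) and (c). For (b), your claim that the \emph{path-order} pairing $(b_1,a_1),\ldots,(b_p,a_p)$ of $P \cap E_1$ already satisfies the ``no shortcut'' condition, because any shortcut ``would either strictly shorten $P$ or reduce its edge count,'' is false in the weighted setting. An $E_1$-edge $(b_j,a_i)$ with $b_j$ before $a_i$ on $P$ can have strictly positive reduced cost $\hat{w}^{\epsilon}_1(b_j) - \hat{w}^{\epsilon}_1(a_i) > 0$; such an edge neither shortens $P$ nor creates a shorter path with fewer edges, so it is not excluded, and \Cref{lemma:can_augment} does not apply to the path order. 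The paper instead \emph{reorders} $P \cap E_1$ by $\hat{w}^{\epsilon}_1(b_i) = d(b_i) + w^{\epsilon}_1(b_i)$ (breaking ties by path position). Under this order, a shortcut $(b_j,a_i)$ with $j < i$ forces $\hat{w}^{\epsilon}_1(b_j) = \hat{w}^{\epsilon}_1(a_i)$, i.e., a zero-reduced-cost chord, and only \emph{then} does the fewest-edges tiebreak give a contradiction. Without the reordering the argument breaks down.

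For (c), the shape of your plan is right (greedy-optimality characterization, triangle inequality when the exchange is already an edge, witness edges otherwise), but the witness structure you would need is more than a single ``alternative old $E_1$-edge.'' The paper invokes \Cref{lemma:new_edges} (Price's Lemma~2.4.2), which produces \emph{two} exchange relations in $S_1$---indices $\ell \le k$ with $(S_1 \setminus \{x\}) \cup \{a_k\} \in \mathcal{I}_1$ and $(S_1 \setminus \{b_\ell\}) \cup \{y\} \in \mathcal{I}_1$ (or $b_\ell = y$)---and the desired inequality is obtained by chaining $\hat{w}^{\epsilon}_1(x) \ge \hat{w}^{\epsilon}_1(a_k) \ge \hat{w}^{\epsilon}_1(b_\ell) \ge \hat{w}^{\epsilon}_1(y)$, where the \emph{middle} inequality $\hat{w}^{\epsilon}_1(a_k) = \hat{w}^{\epsilon}_1(b_k) \ge \hat{w}^{\epsilon}_1(b_\ell)$ uses precisely the reordering from (b). So (c) depends on the fix to (b), and the matroid lemma you would ``argue'' is genuinely nontrivial and should either be proved or cited; the unique-circuit observation alone does not produce the pair $(\ell,k)$ with $\ell \le k$.
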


With the above properties and \Cref{lemma:weight_adjustment}, we finish our
algorithm with the following shortest-path procedure.
Note that in order to make the algorithm subquadratic, we do not construct
the exchange graphs explicitly.
Nevertheless, we show that a partial construction suffices to compute
the shortest-path trees in them.

\begin{lemma}
  It takes $O(n\sqrt{r}\log{n} \cdot \trank)$ time to obtain
  $d(x)$ for each $x \in V$ and the shortest $st$-path with the least number of
  edges in $G_{w^{\epsilon}, S}$.
  \label{lemma:dijkstra}
\end{lemma}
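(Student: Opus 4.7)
The plan is to simulate Dijkstra's algorithm on $G_{w^\epsilon, S}$ without ever enumerating the potentially $\Theta(nr)$ matroid edges explicitly. The key primitive is \Cref{lemma:find_exchange}: upon settling a vertex $u \in S_1$, every $E_1$-edge out of $u$ lands on some $y \in \bar{S}_1$ with $(S_1 \setminus \{u\}) \cup \{y\} \in \mathcal{I}_1$ and carries weight $w^\epsilon_1(u) - w^\epsilon_1(y)$. The tentative distance $d(u) + w^\epsilon_1(u) - w^\epsilon_1(y)$ that $u$ would assign to $y$ is therefore monotone decreasing in $w^\epsilon_1(y)$, so \Cref{lemma:find_exchange} (applied with $f = w^\epsilon_1$ on a balanced binary search tree over $\bar{S}_1$ sorted by $w^\epsilon_1$) lets us discover the next best candidate in $O(\log n \cdot \trank)$ time. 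Edges in $E_2$ are handled symmetrically, and the $O(n)$ edges incident to $s$ and $t$ are listed directly.

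My algorithm would be a standard priority-queue Dijkstra augmented with a lazy edge-discovery procedure: when $u$ is extracted, I would iteratively invoke \Cref{lemma:find_exchange} to produce its next neighbor in the order described above, relax that edge, and continue until the next edge weight would exceed the current minimum key in the priority queue (so no further improvement from $u$ is possible before some other vertex is settled). A ``frontier pointer'' per vertex, revisited each time the priority-queue threshold advances, avoids redoing binary searches. Once all $d(v)$ are computed, I would extract the shortest $st$-path with fewest edges by a single breadth-first search over the tight-edge subgraph $\{(u,v) \in E : d(u) + w(u,v) = d(v)\}$, again consulting \Cref{lemma:find_exchange} so that only tight neighbors are ever touched.

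The payoff of the lazy scheme is that every binary search is paid for by either a genuine improvement of some $\tilde{d}(y)$ or an advancement of the Dijkstra threshold, not by examining useless edges. To bring the total count down to $\tilde{O}(n\sqrt{r})$, I would split vertices into a \emph{light} regime, where direct enumeration costs $O(\sqrt{r}\log n \cdot \trank)$ per settled vertex and hence $O(n\sqrt{r}\log n \cdot \trank)$ in total, and a \emph{heavy} regime of at most $O(\sqrt{r})$ vertices for which a single full sweep of $O(n\log n \cdot \trank)$ is affordable. Assigning each settled vertex to one of the two regimes based on how many tentative-distance updates it issues, and exploiting the alternating $S_1$/$\bar{S}_1$ structure of the exchange graph to bound total updates, should yield the claimed bound.

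The main obstacle I foresee is the amortization. In the unweighted case of Blikstad~\cite{blikstad2021-2}, the $\sqrt{r}$ factor is obtained from a layered BFS whose depth is directly bounded by $\sqrt{r}$; here the $d$-values are real and need not cluster into few layers, so I cannot literally imitate that argument. Instead I must design a potential function that simultaneously tracks wasted binary searches, the number of key updates in the priority queue, and the progress of the Dijkstra threshold, and argue that each is bounded by $O(n\sqrt{r})$ via a swap/exchange analysis specific to matroids. Crafting this potential, and in particular verifying that the final tight-edge BFS respects the same budget, is where I expect to spend most of the effort.
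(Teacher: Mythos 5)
Your building blocks are the right ones (lazy edge discovery via \Cref{lemma:find_exchange}, a Dijkstra scaffold, a split between cheap local work and occasional full sweeps), and your last paragraph correctly diagnoses that everything hinges on the amortization --- but the amortization you sketch is not the one that works, and you explicitly do not close the gap. Classifying settled vertices as \emph{light} or \emph{heavy} based on how many tentative-distance updates they issue has no a priori bound: a single vertex $u$ can be revisited arbitrarily many times as the Dijkstra threshold advances past its successive neighbors, so neither the count of light-vertex binary searches nor the cardinality of the heavy set is controlled by your scheme, and the ``frontier pointer, revisited each time the threshold advances'' bookkeeping can itself incur $\Theta(nr)$ revisit events. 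You would also need to explain why a settled vertex can safely pause its edge discovery at the current threshold and what signal wakes it up later --- none of which is obviously chargeable against an $\tilde{O}(n\sqrt{r})$ budget.

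The paper's amortization attaches to \emph{batches of vertices}, not to individual vertices. It maintains two buffers $B_1 \subseteq F \cap S_1$ and $B_2 \subseteq F \cap \bar{S}_2$ of recently settled vertices. In each Dijkstra step it performs a single \Cref{lemma:find_exchange} query per buffer element (finding that element's best unrelaxed out-edge into $V \setminus F$), costing $O(\sqrt{r}\log n \cdot \trank)$ per step and $O(n\sqrt{r}\log n\cdot\trank)$ overall. When $|B_i| \geq \sqrt{r}$, the algorithm flushes $B_i$ by a full sweep that relaxes every $E(B_i)$-edge at once, restoring the invariant that every edge leaving $F \setminus B$ has been relaxed. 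The crucial counting is structural, not per-vertex: $B_1$ lives inside $S_1$ with $|S_1|=r$, so at most $O(\sqrt{r})$ flushes of $B_1$ can occur, each sweeping over $\head(E_1) \subseteq V\setminus S_1$ in $O(n\log n \cdot \trank)$ time; $B_2$ can be flushed up to $O(n/\sqrt{r})$ times, but because $\head(E_2)\subseteq S_2$ each sweep touches only $r$ targets, costing $O(r\log n\cdot\trank)$. Both sides total $O(n\sqrt{r}\log n\cdot\trank)$. Also note the paper recovers the fewest-edge shortest path simply by storing each vertex's parent as distances are finalized, rather than running a separate tight-edge BFS afterwards, which would reopen the same amortization question. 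If you want to salvage your write-up, replace the per-vertex light/heavy classification with the buffer-and-flush scheme and verify the two invariants that make Dijkstra's greedy choice still valid when most edges leaving $B$ have not yet been relaxed.
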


We are now ready to prove~\Cref{thm:main}.

\begin{proof}[Proof of \Cref{thm:main}]
  Apply \Cref{lemma:weight_adjustment} with $k = r^{3/4}$ to obtain an
  $\epsilon$-partial-solution $(w^{\epsilon}, S)$ of $w$ such that
  $|S_1 \cap S_2| \geq r - O(r^{1/4})$ in
  $O(nr^{3/4}\log{n} \cdot \trank)$ time.
  For $O(r^{1/4})$ iterations, apply \Cref{lemma:dijkstra,lemma:augment} to
  obtain $(\hat{w}^{\epsilon}, \hat{S})$ with $\hat{S}_1$ and $\hat{S}_2$
  having a larger intersection than $S_1$ and $S_2$ do,
  and set $(w^{\epsilon}, S) \gets (\hat{w}^{\epsilon}, \hat{S})$ until
  $S_1 = S_2$.
  This takes overall $O(nr^{3/4}\log{n} \cdot \trank)$ time
  as well.
  Note that $w^{\epsilon}(x) = w^{\epsilon}_1(x) + w^{\epsilon}_2(x)$
  remains the same, completing the proof.
\end{proof}

The remainder of this section proves \Cref{lemma:dijkstra}.
For the ease of notation, we abbreviate $G_{w^{\epsilon}, S}$ as $G$.

Intuitively, we would like to run Dijkstra's algorithm on $G$ to build a
shortest-path tree.
However, na\"{i}ve implementation takes $O(nr)$ time since we might need to
relax $O(nr)$ edges.
This is unlike the BFS algorithm of Chakrabarty et al.~\cite{chakrabarty2019}
for the unweighted case, where we can immediately mark all out-neighbors of the
current vertex as ``visited'', leading to a near-linear running time.
To speed things up, note that using \Cref{lemma:find_exchange}, for a vertex
$x$, we can efficiently find the vertex which is ``closest'' to $x$.
Let $F$ denote the set of visited vertices whose exact distances are known.
The closest unvisited vertex to $F$ must be closest to some $x \in F$.
Therefore, in each iteration, it suffices to only relax the ``shortest'' edge
from each $x \in F$.
This can be done efficiently by maintaining a set of ``recently visited''
vertices $B$ of size roughly $\sqrt{n}$ and computing the distance estimate from
$F \setminus B$ to all unvisited vertices\footnote{In the actual algorithm,
we maintain two buffers instead of one to further improve the running time
to $\tilde{O}(n\sqrt{r})$ from $\tilde{O}(n\sqrt{n})$.
This makes our weighted matroid intersection algorithm $o(nr)$ as opposed to
just $o(n^2)$.}.
In each iteration, we relax the shortest edge from each $x \in B$, and now the
vertex with the smallest distance estimate is closest to $F$ and therefore we
include it into $B$ (and thus $F$).
When $B$ grows too large, we clear $B$ and recompute the distance estimates from
$F$ in $\tilde{O}(n)$ queries.
This leads to a subquadratic algorithm.
We now prove the lemma formally.

\begin{proof}[Proof of \Cref{lemma:dijkstra}]
  The algorithm builds a shortest-path tree of $G$ using Dijkstra's algorithm.
  We maintain a distance estimate $\hat{d}(x)$ for each $x \in V \cup \{s, t\}$.
  Initially, $\hat{d}(x) = 0$ for each $x \in (S_1 \setminus S_2) \cup \{s\}$
  and $\hat{d}(x) = \infty$ for other vertices.
  Edge set $E_t$ is only for the convenience of defining an $st$-path and thus
  we may ignore it here.
  Let $F$ be the set of \emph{visited} vertices whose distance estimates are
  correct, i.e., $d(x) = \hat{d}(x)$ holds for each $x \in F$.
  Initially, $F = \{s\}$.
  The algorithm runs in at most $n$ iterations, and in the $t$-th iteration,
  we visit a new vertex $v_{t}$ such that $d(v_{t}) = \hat{d}(v_t)$ and
  $d(v_{t}) \leq d(v)$ for each $v \not\in F$.
  We maintain two \emph{buffers} $B_1 \subseteq F \cap S_1$ and
  $B_2 \subseteq F \cap \bar{S_2}$ containing vertices in $S_1$ and
  $\bar{S_2}$ that are visited ``recently''.
  That is, after the $t$-th iteration, we have
  $B_1 = \{v_i, \ldots, v_{t}\} \cap S_1$ or $B_1 = \emptyset$ and
  $B_2 = \{v_j, \ldots, v_{t}\} \cap \bar{S_2}$ or $B_2 = \emptyset$ for some
  $i, j \leq t$.
  Recall that $E_1$ and $E_2$ are the edges in $G$ that correspond to
  exchange relations in $\mathcal{M}_1$ and $\mathcal{M}_2$, respectively.
  For each $i \in \{1, 2\}$ and edge $e = (x, y)$, let
  $w_i(x, y) = |w^{\epsilon}_i(x) - w^{\epsilon}_i(y)|$ be the edge weight
  of $e$ in $E_i$ if $e \in E_i$ and $w_i(x, y) = \infty$ otherwise.
  Let $E({B_i}) = \{(x, y) \in E_i \mid x \in B_i\;\text{and}\;y \not\in F\}$
  be edges in $E_i$ directing from $B_i$ to $V \setminus F$
  and $E(B) = E({B_1}) \cup E({B_2})$.
  Let
  $E({F}) = \{(x, y) \in E_1 \cup E_2 \mid x \in F\;\text{and}\;y \not\in F\}$.
  For $v \in V \setminus F$ and edge set $E^\prime$ such that
  $\tail(E^\prime) \subseteq F$, let
  $\tilde{d}(v, E^\prime) = \min_{(x, v) \in E^\prime}\{d(x) + w(x, v)\}$ be the
  shortest distance to $v$ ``relaxed'' by edges in $E^\prime$
  (recall that $w(x, v)$ is the weight of the edge $(x, v)$).
  We maintain the following invariants after each iteration of the algorithm
  except the last one.
  \begin{enumerate}[label=(\roman*)]
    \item\label{I1}
      $d(v) \leq \hat{d}(v) \leq \tilde{d}(v, E(F) \setminus E(B))$ holds
      for each $v \in V \setminus F$.
    \item\label{I2} There exists a $v \in V \setminus F$ such that
      $\hat{d}(v) = d^{*}_F := \min_{u \in V \setminus F}\{\tilde{d}(u, E(F))\}$.
  \end{enumerate}
  Intuitively, Invariant~\ref{I1} asserts that all edges in
  $E(F) \setminus E(B)$ are ``relaxed'' while Invariant~\ref{I2} ensures that
  the distance estimate of the target vertex, i.e., one with the shortest
  distance from $s$, is correct.
  Initially, both invariants are satisfied since
  $\hat{d}(x) = 0$ holds for each $x \in S_1 \setminus S_2$.
  We maintain a priority queue $Q$ containing vertices in $V \setminus F$
  ordered by $\hat{d}(\cdot)$.
  In the $t$-th iteration, let $v_{t}$ be the vertex $v$ with the smallest
  $\hat{d}(v)$.
  By Invariants~\ref{I1} and~\ref{I2}, we have $\hat{d}(v_{t}) = d^{*}_F$
  and thus $d(v_{t}) \leq d(v^\prime)$ holds for each
  $v^\prime \in V \setminus F$ according to Dijkstra's algorithm.
  As such, we push $v_{t}$ into $F$ and update $B_1$, $B_2$ appropriately by
  checking if $v_{t}$ belongs to $S_1$ and $\bar{S_2}$.
  Now, we would like to modify $\hat{d}(v)$ for some $v \in V \setminus F$ so
  that both invariants remain true.
  For each $i \in \{1, 2\}$, depending on the size of $B_i$, we perform one of
  the following.
  
  \begin{enumerate}
    \item\label{case1} If $|B_i| \geq \sqrt{r}$, then we compute
      $\tilde{d}_i(v) = \tilde{d}(v, E(B_1))$ and set
      $\hat{d}(v) \gets \min(\hat{d}(v), \tilde{d}_i(v))$ for
      each $v \in V \setminus F$ using \Cref{lemma:relax} below.
      For $i = 1$, by definition of $G$, $\head(E_1) \subseteq V \setminus S_1$
      and thus we only need to compute $\tilde{d}_i(v)$ for
      $v \in V \setminus S_1$, and therefore \Cref{lemma:relax} takes
      $O(n\log{n} \cdot \trank)$ time.
      For $i = 2$, similarly, $\head(E_2) \subseteq S_2$ and thus
      we only need to compute $\tilde{d}_i(v)$ for $v \in S_2$,
      taking $O(r\log{n} \cdot \trank)$ time.
      Then, we set $B_i \gets \emptyset$, and the above modification ensures
      that Invariant~\ref{I1} holds since
      $\tilde{d}(v, E(F)) = \min(\tilde{d}(v, E(F) \setminus E(B)), \tilde{d}(v, E(B)))$.
    \item\label{case2} If $|B_i| < \sqrt{r}$, then we do not clear $B_i$ and
      therefore Invariant~\ref{I1} trivially holds.
      For each $b \in B_i$, we find a $v_b \in V \setminus F$ minimizing
      $d(b) + w_i(b, v_b)$ via \Cref{lemma:find_exchange} as follows.
      If $i = 1$, then we have
      $w_1(b, v_b) = w^{\epsilon}_1(b) - w^{\epsilon}_1(v_b)$,
      and thus we find the $v_b$ maximizing $w^{\epsilon}_1(v_b)$ such that
      $(S_1 \setminus \{b\}) \cup \{v_b\} \in \mathcal{I}_1$.
      If $i = 2$, then
      $w_2(b, v_b) = w^{\epsilon}_2(v_b) - w^{\epsilon}_2(b)$,
      and thus we find the $v_b$ minimizing $w^{\epsilon}_2(v_b)$ such that
      $(S_2 \setminus \{v_b\}) \cup \{b\} \in \mathcal{I}_2$.
      Then, we set $\hat{d}(v_b) \gets \min(\hat{d}(v_b), d(b) + w_i(b, v_b))$
      and update $v_b$'s position in $Q$ appropriately.
      This takes $O(\sqrt{r}\log{n}\cdot\trank)$ time.
  \end{enumerate}
  
  In both cases, as argued above, Invariant~\ref{I1} holds.
  We argue that Invariant~\ref{I2} holds after the iteration as well.
  Let $B_1^{(t)}$ be $B_1$ after the $t$-th iteration and define
  $B_2^{(t)}$ and $F^{(t)}$ similarly.
  Let $E(B^{(t)})$ denote $E(B_1^{(t)}) \cup E(B_2^{(t)})$.
  Let $v^{*} = \argmin_{v \in V \setminus F^{(t)}}\{\tilde{d}(v, E(F^{(t)}))\}$
  be an unvisited vertex after the $t$-th iteration with the smallest distance
  from $s$ and let $e^{*} = (u, v^{*})$ be the edge such that $u \in F^{(t)}$
  and $d(v^{*}) = d(u) + w(e^{*})$.
  That is, $e^{*}$ is the edge connecting $v^{*}$ and its parent in the
  shortest-path tree.
  If $e^{*} \in E(F^{(t-1)}) \setminus E(B^{(t-1)})$, then Invariant~\ref{I2}
  trivially follows from the end of the $(t - 1)$-th iteration.
  Otherwise, we must have either $e^{*} \in E(B_1^{(t-1)})$ or
  $e^{*} \in E(B_2^{(t-1)})$.
  Without loss of generality, let's assume $e^{*} \in E(B_1^{(t-1)})$.
  If $|B_1^{(t-1)}| + 1 \geq \sqrt{r}$ (i.e., Case~\ref{case1}), then after
  setting $B_1^{(t)} \gets \emptyset$, Invariant~\ref{I2} follows from the
  fact the Invariant~\ref{I1} holds for $v^{*}$ and
  $\tilde{d}(v^{*}, E(F^{(t)}) \setminus E(B^{(t)})) \leq d(u) + w(e^{*})$ since
  $e^{*} \in E(F^{(t)}) \setminus E(B^{(t)})$.
  If $|B_1^{(t-1)}| + 1 < \sqrt{r}$ (i.e., Case~\ref{case2}), then there must
  exists a $b \in B_1^{(t)}$ such that
  $d(b) + w_1(b, v^{*}) = \min_{v}\{d(b) + w_1(b, v)\}$
  and thus we have at least one $v_b \in V \setminus F^{(t)}$ such that
  $\hat{d}(v_b) \leq d(b) + w_1(b, v_b) = d(v^{*})$.
  This shows that Invariant~\ref{I2} indeed holds after the $t$-th iteration.
  The correctness of the algorithm follows from the two invariants and
  the analysis of Dijkstra's algorithm.
  
  To bound the total running time, observe that for $B_1$, Case~\ref{case1}
  happens at most $O(r/\sqrt{r}) = O(\sqrt{r})$ times since $|S_1| = r$.
  Thus, it takes $O(n\sqrt{r}\log{n} \cdot \trank)$ time in
  total.
  Similarly, for $B_2$, Case~\ref{case1} happens at most $O(n/\sqrt{r})$ time,
  taking
  $O(n/\sqrt{r} \cdot r\log{n} \cdot \trank)
    = O(n\sqrt{r}\log{n} \cdot \trank)$
  time in total as well.
  For Case~\ref{case2}, each iteration takes
  $O(\sqrt{r}\log{n}\cdot\trank)$ time, contributing a total of
  $O(n\sqrt{r}\log{n} \cdot \trank)$ time.
  As a result, the algorithm runs in
  $O(n\sqrt{r}\log{n} \cdot \trank)$ time, as claimed.

  Finally, it is easy to maintain balanced binary search trees of elements
  ordered by $u_1$, $u_2$, $\hat{d} + u_1$, and $\hat{d} - u_2$ in
  $O(n\log{n})$ time throughout the procedure so that
  \Cref{lemma:find_exchange} can be applied without overhead.
  The shortest $st$-path can also be easily recovered by maintaining the
  optimal parent in the shortest-path tree for each vertex.
  This proves the lemma.
\end{proof}

\begin{lemma}
  For each $i \in \{1, 2\}$,
  given $B_i \subseteq F$ and $R \subseteq V \setminus F$,
  it takes
  takes $O(|R| \log{n} \cdot \trank)$ time to compute
  $\tilde{d}(v, E(B_i))$ for all $v \in R$.
  \label{lemma:relax}
\end{lemma}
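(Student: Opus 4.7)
My plan is to reduce the per-$v$ computation of $\tilde{d}(v,E(B_i))$ to a single call of \Cref{lemma:find_exchange}, so that all of $R$ is processed in $O(|R|\log n\cdot\trank)$ time.

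First I would handle $i=1$. By the definition of $E_1$, an edge $(x,v)$ is present iff $x\in S_1$, $v\notin S_1$, and $(S_1\setminus\{x\})\cup\{v\}\in\mathcal{I}_1$, with weight $w^{\epsilon}_1(x)-w^{\epsilon}_1(v)$. Pulling the $-w^{\epsilon}_1(v)$ term (which does not depend on the minimizer) out of the inner minimum, one sees
\[
  \tilde{d}(v,E(B_1)) \;=\; -w^{\epsilon}_1(v) + \min\bigl\{\,d(x)+w^{\epsilon}_1(x)\;:\; x\in B_1,\;(S_1\setminus\{x\})\cup\{v\}\in\mathcal{I}_1\,\bigr\},
\]
which is exactly the form handled by \Cref{lemma:find_exchange} with ground set $S=S_1$, candidate set $B_1\subseteq S_1$, inserted element $v\in\bar{S_1}$, and objective $f=d+w^{\epsilon}_1$; one call costs $O(\log n\cdot\trank)$. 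The case $i=2$ is symmetric: this time I would factor out $+w^{\epsilon}_2(v)$ and be left minimizing $d(x)-w^{\epsilon}_2(x)$ over $x\in B_2\subseteq \bar{S_2}$ subject to $(S_2\setminus\{v\})\cup\{x\}\in\mathcal{I}_2$, which again matches \Cref{lemma:find_exchange} with $S=S_2$ and $f=d-w^{\epsilon}_2$.

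The only real obstacle is bookkeeping: \Cref{lemma:find_exchange} needs $B_i$ stored in a balanced BST keyed by the objective $f$, and the key involves $d(x)$, which is only finalized when $x$ enters $F$. Fortunately each vertex joins a buffer at most once per buffer-reset and its $d$-value is fixed at the moment of insertion, so maintaining BSTs of $B_1$ and $B_2$ sorted by $\hat{d}+u_1$ and $\hat{d}-u_2$ respectively costs only $O(n\log n)$ across the entire run of \Cref{lemma:dijkstra}, consistent with the closing remark of that proof. With these structures in place, summing $O(\log n\cdot\trank)$ over $v\in R$ yields the claimed bound.
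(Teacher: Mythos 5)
Your proposal matches the paper's proof: in both, $\tilde{d}(v,E(B_1))$ is found by calling \Cref{lemma:find_exchange} with $f = d + w^{\epsilon}_1$ over $B_1 \subseteq S_1$, and $\tilde{d}(v,E(B_2))$ by calling it with $f = d - w^{\epsilon}_2$ over $B_2 \subseteq \bar{S_2}$, one call per $v \in R$. The BST-maintenance remark you add is also the one the paper makes (at the end of the proof of \Cref{lemma:dijkstra}), and it is correct since $d(x)=\hat{d}(x)$ is fixed once $x$ enters $F$.
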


\begin{proof}
  For $i = 1$ and $e = (b, v) \in E(B_i)$, we have
  $w(e) = w^{\epsilon}_1(b) - w^{\epsilon}_1(v)$.
  Therefore, $d(v, E(B_1))$ can be computed by finding the $b \in B$ with the
  smallest $d(b) + w^{\epsilon}_1(b)$ such that
  $(S_1 \setminus \{b\}) \cup \{v\} \in \mathcal{I}_1$ via
  \Cref{lemma:find_exchange}.
  Similarly, for $i = 2$, we have
  $w(e) = w^{\epsilon}_2(v) - w^{\epsilon}_2(b)$,
  and thus $d(v, E(B_2))$ can be computed by finding the $b \in B$ with the
  smallest $d(b) - w^{\epsilon}_2(b)$.
  The lemma simply follows by calling \Cref{lemma:find_exchange} once for each
  $v \in R$.
\end{proof}

\subsection{Bounding the Numbers}\label{sec:bounding}

Finally, to conclude the analysis of our algorithm, we argue that the numbers
such as $w^{\epsilon}_1(x)$ and $w^{\epsilon}_2(x)$ are bounded by
$\tilde{O}(\poly(nW))$ so that the number of bits needed to store them and the
time for a single arithmetic operation only grow by a constant factor.
In the weight adjustment stage, each number is adjusted at most
$O(r)$ times and each adjustment changes the number by at most $O(W)$
since $\epsilon$ is at most $W$.
Therefore, the accumulative change to a number via weight adjustments is
at most $O(\poly(nW))$.
For growth incurred by augmentations, we first assume that all vertices are
reachable from $s$ in $G_{w^{\epsilon}, S}$.
Consider a single run of \Cref{lemma:dijkstra} and fix an $x \in V$.
Let $P_x = \{s, v_1, \ldots, v_k\}$ with $v_k = x$ be the shortest
$sx$-path in $G_{w^{\epsilon}, S}$.
Suppose that $(v_1, v_2), (v_{k-1}, v_k) \in E_1$, then by definition,
we have
\begin{align*}
  d(x)
    &= w^{\epsilon}_1(v_1) - w^{\epsilon}_1(v_2) + w^{\epsilon}_2(v_3) - w^{\epsilon}_2(v_2) + \cdots + w^{\epsilon}_1(v_{k-1}) - w^{\epsilon}_1(v_k) \\
    &\leq w^{\epsilon}_1(v_1) - w(v_2) + (w(v_3) + \epsilon) + \cdots + (w(v_{k-1}) + \epsilon) - w^{\epsilon}_1(v_k) \\
    &\leq w^{\epsilon}_1(v_1) - w^{\epsilon}_1(v_k) + \left(\sum_{i = 2}^{k-1}(-1)^{i+1}w(v_i)\right) + nW.
\end{align*}
Since $\hat{w}^{\epsilon}_1(x) = w^{\epsilon}_1(x) + d(x)$ and
$\hat{w}^{\epsilon}_2(x) = w^{\epsilon}_2(x) - d(x)$ as defined in
\Cref{lemma:augment},
we have
\begin{eqnarray}
  |\hat{w^{\epsilon}}_1(x)| \leq |w^{\epsilon}_1(v_1)| + 2nW & \text{and} & |\hat{w}^{\epsilon}_2(x)| \leq |w^{\epsilon}_1(v_1)| + 2nW.
  \label{eq:abs_bound_1}
\end{eqnarray}
Similarly, if $(v_{k-1}, v_k) \in E_2$, then
\begin{align*}
  d(x)
    &= w^{\epsilon}_1(v_1) - w^{\epsilon}_1(v_2) + w^{\epsilon}_2(v_3) - w^{\epsilon}_2(v_2) + \cdots + w^{\epsilon}_2(v_k) - w^{\epsilon}_2(v_{k-1}) \\
    &\leq w^{\epsilon}_1(v_1) - w(v_2) + (w(v_3) + \epsilon) + \cdots + (w(v_{k-2}) + \epsilon) - w(v_{k-1}) + w^{\epsilon}_2(v_k) \\
    &\leq w^{\epsilon}_1(v_1) + w^{\epsilon}_2(v_k) + \left(\sum_{i=2}^{k-1}(-1)^{i+1}w(v_i)\right) + nW,
\end{align*}
implying \eqref{eq:abs_bound_1} as well.
The case when $(v_1, v_2) \in E_2$ holds similarly, except now we have
\begin{eqnarray}
|\hat{w}^{\epsilon}_1(x)| \leq |w^{\epsilon}_2(v_1)| + 2nW & \text{and} & |\hat{w}^{\epsilon}_2(x)| \leq |w^{\epsilon}_2(v_1)| + 2nW.
\label{eq:abs_bound_2}
\end{eqnarray}

Since the number of augmentations is $\tilde{O}(r^{1/4})$, we indeed have
that
$|w^{\epsilon}_1(x)| = |w^{\epsilon}_2(x)| = O(\poly(nW)) = \Theta((nW)^{k})$
for some constant $k$.
For the case where some vertex $x$ is not reachable from $s$, we can
simply set $d(x)$ to some $c(nW)^{k+1}$ for a large enough constant
$c$ and the desired bound still holds.


\section{Concluding Remarks}\label{sec:conclusion}
We present a simple subquadratic algorithm for weighted matroid intersection
under the rank oracle model, providing a partial yet affirmative answer to one
of the open problems raised by Blikstad et al.~\cite{blikstad2021}.
Whether the same is achievable under the independence oracle model remains open.
It seems that our techniques for computing shortest-path trees do not solely
result in a subquadratic augmenting-path algorithm under the independence
oracle.
Removing the dependence on $\log{W}$ and making the algorithm run in
strongly-polynomial time is also of interest.
Finally, as noted in~\cite{blikstad2021}, there were very few non-trivial lower
bound results for matroid intersection.
It would be helpful to see if there is any super-linear lower bound on the
number of queries for these problems or even for computing shortest-path trees
in the exchange graphs under either oracle model.

\section*{Acknowledgements}
I would like to thank Prof. Hsueh-I Lu for advising the
project and helpful suggestions on the writing and notation, Brian Tsai for
proof-reading an initial draft of this paper, and the anonymous reviewers of
ISAAC 2022 for their useful comments.

\bibliography{main}

\appendix

\section{Omitted Proofs}
\label{appendix:omitted_proofs}

\subsection{Proofs of Lemmas in \Cref{sec:adjustment}}

For self-containedness, we include the proof that the two bases obtained in the
weight adjustment phase have a large intersection by Shigeno and Iwata
\cite{shigeno1995} here.

\begin{lemma}[\cite{shigeno1995}]
  Let $S_1$ and $S_2$ be obtained from the procedure described in
  \Cref{lemma:weight_adjustment}.
  Then, $|S_1 \cap S_2| \geq \left(1 - \frac{O(1)}{k}\right)r$.
  \label{lemma:large_intersection}
\end{lemma}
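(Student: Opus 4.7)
The plan is to follow the amortized-analysis approach of Shigeno and Iwata~\cite{shigeno1995}. I would first verify three invariants that persist after every weight-adjustment step: (a) $w^{\epsilon}$ remains an $\epsilon$-splitting of $w$, since case~1 drops the pointwise sum $w^{\epsilon}_1(x)+w^{\epsilon}_2(x)$ from $w(x)+\epsilon$ to $w(x)$ while case~2 restores it to $w(x)+\epsilon$; (b) $S_i$ remains a $w^{\epsilon}_i$-maximum basis of $\mathcal{M}_i$, since the exchange candidate chosen via \Cref{lemma:find_exchange} is exactly the one that restores maximality whenever the weight tweak could have spoiled it; and (c) $0 \leq p(x) \leq k$ at all times.

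Given these invariants, the termination criterion does most of the work: when the procedure halts, every $x \in S_1 \setminus S_2$ must satisfy $p(x) = k$, so
\[ k \cdot |S_1 \setminus S_2| \;\leq\; \sum_{x \in S_1 \setminus S_2} p(x) \;\leq\; \sum_{x \in V} p(x). \]
Thus the claimed bound $|S_1 \cap S_2| \geq (1 - O(1)/k)r$ would follow once the total number of case-2 events, which equals $\sum_{x \in V} p(x)$, is shown to be $O(r)$.

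Establishing $\sum_x p(x) = O(r)$ is the main obstacle. My plan is to set up a potential function that combines the monotone growth of $w^{\epsilon}_2$ (each case-2 event raises $w^{\epsilon}_2$ on one element by exactly $\epsilon$ and never decreases it) with the $w^{\epsilon}_2$-maximality of $S_2$. Each case-2 event at $x$ is then amortized as follows: if it triggers an actual exchange into $S_2$, the exchange directly charges against a basis-swap counter bounded by $|S_2| = r$; if it fails to trigger an exchange, every element of $S_2$ swappable with $x$ must have a $w^{\epsilon}_2$-value exceeding the newly raised $w^{\epsilon}_2(x)$, and this ``unused headroom'' can only grow by a total of $O(r\epsilon)$ because $w^{\epsilon}$ remains an $\epsilon$-splitting. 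A symmetric argument handles the case-1 events that interleave with them. Together, these cap $\sum_x p(x)$ at $O(r)$, which, divided by $k$, gives $|S_1 \setminus S_2| = O(r/k)$ and hence the stated intersection bound. The subtle part is isolating the precise potential so that both the exchange and non-exchange sub-cases admit a clean accounting; this is the content of the proof in~\cite{shigeno1995}.
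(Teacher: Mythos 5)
Your reduction is on the right track up to $k \cdot |S_1 \setminus S_2| \leq p(S_1 \setminus S_2)$, but the step you flag as ``the main obstacle'' --- showing $\sum_{x \in V} p(x) = O(r)$ --- is strictly stronger than what is needed, and your amortization sketch does not establish it. Case-2 exchanges into $S_2$ are not obviously bounded by $r$: an element that enters $S_2$ can later be swapped out again (as the $y$ in a subsequent case-2 step), so ``charging to a basis-swap counter bounded by $|S_2| = r$'' is circular. Likewise, the ``unused headroom'' of an element of $S_2$ is not monotone (it is eroded by intervening case-1 drops of $w^{\epsilon}_1$ and raised by case-2 bumps of $w^{\epsilon}_2$), so there is no clear $O(r\epsilon)$ cap. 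Indeed, the pointwise bound one can actually extract, $p(x)\epsilon \leq w^{2\epsilon}_1(x) - w^{\epsilon}_1(x)$, only yields $\epsilon \sum_{x \in V} p(x) \leq w^{2\epsilon}_1(V) - w^{\epsilon}_1(V)$, which need not be $O(r\epsilon)$ when $n \gg r$.

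The paper's proof instead bounds $p(S_1 \setminus S_2)$ only for the \emph{final} $S_1, S_2$, which suffices. The structural fact you are missing is that no element ever moves \emph{into} $S_2 \setminus S_1$ during the procedure; hence $p(S_2 \setminus S_1) = 0$ and $p(S_1 \setminus S_2) = p(S_1) - p(S_2)$. Combining this identity with the pointwise bound $p(x)\epsilon \leq w^{2\epsilon}_1(x) - w^{\epsilon}_1(x)$, and then comparing against the common basis $S^\prime$ of the given $2\epsilon$-solution (using the $w^{\epsilon}_i$-maximality of each $S_i$, the $w^{2\epsilon}_i$-maximality of $S^\prime$, and the $\epsilon$-splitting inequalities), gives $p(S_1 \setminus S_2)\epsilon \leq 2\epsilon r$ directly. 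This ``compare against the reference solution $S^\prime$'' step is the core idea, and the purely local potential-function plan you propose is not a substitute for it.
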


\begin{proof}
  Let $p(S)$ denote $\sum_{x \in S}p(x)$.
  Observe that an element is never moved to $S_2 \setminus S_1$ during
  weight adjustments, and therefore we have $p(S_2 \setminus S_1) = 0$
  and $p(S_1 \setminus S_2) = p(S_1) - p(S_2)$.
  Recall that $S^\prime$ is a common basis such that
  $(w^{2\epsilon}, S^\prime)$ is a $2\epsilon$-solution.
  Since $p(x)$ equals the number of adjustments of $w^{\epsilon}_2(x)$ and
  each such adjustment is preceded by an adjustment of $w^{\epsilon}_1(x)$,
  we have
  \[ p(x) \cdot \epsilon = w^{\epsilon}_2(x) - (w(x) - w^{2\epsilon}_1(x)) \leq w^{2\epsilon}_1(x) - w^{\epsilon}_1(x) \]
  for each $x \in V$.
  Thus,
  \begin{align*}
    p(S_1 \setminus S_2) \cdot \epsilon
      &= \left(p(S_1) - p(S_2)\right) \cdot \epsilon \\
      &\leq \left(w^{2\epsilon}_1(S_1) - w^{\epsilon}_1(S_1)\right)
        - \left(w^{\epsilon}_2(S_2) - w(S_2) + w^{2\epsilon}_1(S_2)\right) \\
      &\stackrel{\text{(a)}}{\leq} w^{2\epsilon}_1(S_1) - w^{\epsilon}_1(S^\prime) - w^{\epsilon}_2(S^\prime) + w(S_2) - w^{2\epsilon}_1(S_2) \\
      &\stackrel{\text{(b)}}{\leq} w^{2\epsilon}_1(S_1) - w(S^\prime) + w(S_2) - w^{2\epsilon}_1(S_2),
  \end{align*}
  where (a) is because $S_i$ is $w^{\epsilon}_i$-maximum for each
  $i \in \{1, 2\}$ and
  (b) is because $w(S^\prime) \leq w^{\epsilon}(S^\prime)$ as $w^{\epsilon}$
  is an $\epsilon$-splitting.
  Since $(w^{2\epsilon}, S^\prime)$ is a $2\epsilon$-solution,
  $w^{2\epsilon}_2(S) - 2\epsilon r \leq w(S) - w^{2\epsilon}_1(S) \leq w^{2\epsilon}_2(S)$
  holds for each basis $S$.
  This combined with the fact that $S^\prime$ is $w^{2\epsilon}_i$-maximum
  for each $i \in \{1, 2\}$ implies
  \[
    p(S_1 \setminus S_2) \cdot \epsilon
      \leq 2\epsilon r - w^{2\epsilon}_2(S^\prime) + w^{2\epsilon}_2(S_2)
      \leq 2\epsilon r
      \implies p(S_1 \setminus S_2) \leq 2r.
  \]
  When the algorithm terminates, we have $p(x) = k$ for all
  $x \in S_1 \setminus S_2$, implying
  \[
    p(S_1 \setminus S_2) = |S_1 \setminus S_2| \cdot k
      \leq 2r \implies |S_1 \setminus S_2| \leq \frac{2r}{k}.
  \]
  As a result,
  \[
    |S_1 \cap S_2| = r - |S_1 \setminus S_2|
      \geq \left(1 - \frac{O(1)}{k}\right) r.
  \]
\end{proof}

\subsection{Proofs of Lemmas in \Cref{sec:augmentation}}

In this section, we prove the properties of the exchange graphs.
The proofs for the more generalized auxiliary graph given by Fujishige and
Zhang can be found in~\cite{fujishige1995}. 

To prove \Cref{lemma:has_path}, it would be more convenient to refer to the
following definition of a directed bipartite graph based on exchange
relationships, which is heavily used in unweighted matroid intersection
algorithms.
For $S \in \mathcal{I}_1 \cap \mathcal{I}_2$, let
$\tilde{G}_S = (V \cup \{s, t\}, \tilde{E})$ with $s, t \not\in V$ denote
the directed graph with
$\tilde{E} = \tilde{E}_1 \cup \tilde{E}_2 \cup \tilde{E}_s \cup \tilde{E}_t$,
where
\begin{align*}
  \tilde{E}_1 &= \{(x, y) \mid x \in S, y \not\in S,\;\text{and}\;(S \setminus \{x\}) \cup \{y\} \in \mathcal{I}_1\}, \\
  \tilde{E}_2 &= \{(y, x) \mid x \in S, y \not\in S,\;\text{and}\;(S \setminus \{x\}) \cup \{y\} \in \mathcal{I}_2\}, \\
  \tilde{E}_s &= \{(s, x) \mid S \cup \{x\} \in \mathcal{I}_1 \},\;\text{and} \\
  \tilde{E}_t &= \{(x, t) \mid S \cup \{x\} \in \mathcal{I}_2 \}.
\end{align*}

\begin{lemma}[\cite{lawler1975}]
  $\tilde{G}_S$ for $|S| < r$ admits an $st$-path.
  \label{lemma:has_path_unweighted}
\end{lemma}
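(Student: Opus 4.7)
The plan is to prove the contrapositive: if $\tilde{G}_S$ admits no $st$-path, then $S$ is already a maximum common independent set, contradicting $|S| < r$. Let $A \subseteq V$ denote the set of elements reachable from $s$ in $\tilde{G}_S$. Since $t$ is unreachable, four ``blocking'' conditions fall out directly from the edge definitions: (i) for $y \in \bar{S} \setminus A$, $S \cup \{y\} \notin \mathcal{I}_1$ (else $(s,y) \in \tilde{E}_s$ would pull $y$ into $A$); (ii) for $y \in \bar{S} \cap A$, $S \cup \{y\} \notin \mathcal{I}_2$ (else $(y,t) \in \tilde{E}_t$ would reach $t$); (iii) for $x \in S \cap A$ and $y \in \bar{S} \setminus A$, $(S \setminus \{x\}) \cup \{y\} \notin \mathcal{I}_1$ (else the edge $x \to y$ in $\tilde{E}_1$ would propagate reachability from $x$ to $y$); and (iv) the symmetric statement for $\mathcal{I}_2$ with $x \in S \setminus A$ and $y \in \bar{S} \cap A$.

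The heart of the proof is the pair of rank estimates $\rank_2(A) \leq |S \cap A|$ and $\rank_1(V \setminus A) \leq |S \setminus A|$. For the first, suppose for contradiction that $\rank_2(A) > |S \cap A|$. Since $S \cap A \in \mathcal{I}_2$, the matroid augmentation axiom applied against a strictly larger $\mathcal{M}_2$-independent subset of $A$ yields some $y \in A \cap \bar{S}$ with $(S \cap A) \cup \{y\} \in \mathcal{I}_2$. By (ii), $S \cup \{y\}$ is dependent in $\mathcal{M}_2$, so it contains a unique circuit $C$ through $y$. Every $x \in C \setminus \{y\}$ certifies the edge $y \to x \in \tilde{E}_2$; since $y \in A$, reachability propagation forces $x \in A$, and hence $x \in S \cap A$. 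Therefore $C \subseteq (S \cap A) \cup \{y\}$, contradicting independence of $(S \cap A) \cup \{y\}$ in $\mathcal{M}_2$. The mirror argument, using (i) and (iii) together with circuits in $\mathcal{M}_1$, gives $\rank_1(V \setminus A) \leq |S \setminus A|$.

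Finally, for any $R \in \mathcal{I}_1 \cap \mathcal{I}_2$,
\[
  |R| = |R \cap A| + |R \setminus A| \leq \rank_2(A) + \rank_1(V \setminus A) \leq |S \cap A| + |S \setminus A| = |S|,
\]
so $|S|$ attains the maximum common-independent-set size, contradicting $|S| < r$. The main delicacy is the circuit step: one must invoke uniqueness of the circuit $C \subseteq S \cup \{y\}$ to translate each of its non-$y$ members into an outgoing $\tilde{E}_2$-edge from $y$, and then use closure of $A$ under edge-following to confine $C \setminus \{y\}$ entirely to $S \cap A$. Everything else is routine bookkeeping with the four blocking conditions.
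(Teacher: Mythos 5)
Your proof is correct and self-contained, and it fills in a step that the paper merely cites (the appendix attributes \Cref{lemma:has_path_unweighted} to Lawler without providing a proof). The argument you give is the classical min-max certificate proof: taking $A$ to be the set of vertices reachable from $s$, you derive the four blocking conditions from the edge definitions, and then use the unique-circuit fact to bound $\rank_2(A)\le|S\cap A|$ and $\rank_1(V\setminus A)\le|S\setminus A|$, concluding $|R|\le|S|$ for every common independent set $R$, which contradicts $|S|<r$. All the steps check out: the augmentation axiom correctly produces a $y\in A\cap\bar S$ with $(S\cap A)\cup\{y\}\in\mathcal{I}_2$; condition (ii) forces $S\cup\{y\}$ to be dependent in $\mathcal{M}_2$; the unique circuit $C\ni y$ in $S\cup\{y\}$ gives, for every $x\in C\setminus\{y\}$, the edge $(y,x)\in\tilde E_2$, and closure of $A$ under edges then forces $C\subseteq(S\cap A)\cup\{y\}$, contradicting its independence. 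The mirror bound using (i), (iii), and circuits in $\mathcal{M}_1$ works the same way, and the final inequality $|R|=|R\cap A|+|R\setminus A|\le\rank_2(A)+\rank_1(V\setminus A)\le|S|$ is exactly Edmonds' matroid-intersection min-max formula specialized to the set $A$. Since the paper simply invokes Lawler here, your blind proof is a faithful and complete reproduction of the standard argument rather than a departure from the paper's route.
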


We will use the existence of an $st$-path in $\tilde{G}_{\tilde{S}}$ to prove
that such a path exists in $G_{w^{\epsilon}, S}$, for $\tilde{S} = S_1 \cap S_2$.
The following claims certify that $\tilde{G}_{\tilde{S}}$ and $G_{w^{\epsilon}, S}$
are almost the same.

\begin{claim}
  Let $\mathcal{M} = (V, \mathcal{I})$ be a matroid,
  $S \subseteq S^\prime \in \mathcal{I}$, $x \in S$, and $y \not\in S^\prime$
  such that $(S \setminus \{x\}) \cup \{y\} \in \mathcal{I}$ but
  $S \cup \{y\} \not\in \mathcal{I}$, then
  $(S^\prime \setminus \{x\}) \cup \{y\} \in \mathcal{I}$.
  \label{claim:still_edge}
\end{claim}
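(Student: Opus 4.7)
The plan is to exploit the fundamental circuit property stated in the preliminaries: if $T \in \mathcal{I}$ and $z \notin T$, then $T \cup \{z\}$ contains at most one circuit, and any such circuit must contain $z$. The whole argument comes down to showing that there is a single ``bad'' circuit both in $S \cup \{y\}$ and in $S' \cup \{y\}$, and that removing $x$ destroys it in either case.

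First I would observe that $S \in \mathcal{I}$ by downward closure (\ref{item:downward_closure}), since $S \subseteq S' \in \mathcal{I}$. Because $S \cup \{y\} \notin \mathcal{I}$, the fundamental circuit property gives a unique circuit $C \subseteq S \cup \{y\}$, and moreover $y \in C$. Next, I would use the hypothesis $(S \setminus \{x\}) \cup \{y\} \in \mathcal{I}$ to deduce that $x \in C$: otherwise $C$ would be a circuit contained in $(S \setminus \{x\}) \cup \{y\}$, contradicting independence.

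Now I would pass to the larger set $S' \cup \{y\}$. Since $C \subseteq S \cup \{y\} \subseteq S' \cup \{y\}$ and $C$ is a circuit, $S' \cup \{y\}$ is dependent. Applying the fundamental circuit property to $S' \in \mathcal{I}$ and $y \notin S'$, there exists a unique circuit $C' \subseteq S' \cup \{y\}$, and $y \in C'$. Since $C$ is also a circuit contained in $S' \cup \{y\}$, the uniqueness forces $C' = C$, and so $x \in C'$. Therefore $(S' \setminus \{x\}) \cup \{y\} = (S' \cup \{y\}) \setminus \{x\}$ no longer contains the unique circuit of $S' \cup \{y\}$, and hence is independent, finishing the proof.

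There is essentially no real obstacle here; the only subtlety is making sure to invoke the uniqueness clause of the fundamental circuit property in the right direction (to identify $C$ with $C'$ rather than to produce them independently). I would write the proof as three short steps: existence of $C$ with $y \in C$, membership $x \in C$ from the independence hypothesis on $(S \setminus \{x\}) \cup \{y\}$, and identification $C = C'$ via uniqueness in $S' \cup \{y\}$.
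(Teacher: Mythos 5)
Your proof is correct and follows essentially the same route as the paper's: identify the unique circuit $C$ of $S \cup \{y\}$, note $x \in C$ from the independence of $(S\setminus\{x\})\cup\{y\}$, and use uniqueness of the circuit in $S'\cup\{y\}$ to conclude that deleting $x$ restores independence. The extra explicit steps (downward closure for $S \in \mathcal{I}$, the identification $C = C'$) are just a more detailed writeup of the paper's argument.
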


\begin{proof}
  Let $C$ be the unique circuit in $S \cup \{y\}$.
  Since $C \subseteq S^\prime \cup \{y\}$ and $S^\prime \cup \{y\}$ has only one
  circuit, $C$ is the unique circuit in $S^\prime \cup \{y\}$ as well.
  Moreover, $(S \setminus \{x\}) \cup \{y\} \in \mathcal{I}$ if and only if
  $x \in C$ and therefore
  $(S^\prime \setminus \{x\}) \cup \{y\} \in \mathcal{I}$.
\end{proof}

\begin{claim}
  Let $\mathcal{M} = (V, \mathcal{I})$ be a matroid,
  $S \subseteq S^\prime \in \mathcal{I}$ where $S^\prime$ is a basis of
  $\mathcal{M}$, and $x \not\in S^\prime$ such that
  $S \cup \{x\} \in \mathcal{I}$.
  Then, there exists a $y \in S^\prime \setminus S$ such that
  $(S^\prime \setminus \{y\}) \cup \{x\} \in \mathcal{I}$.
  \label{claim:extend}
\end{claim}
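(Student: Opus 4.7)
The plan is to invoke the standard fact (mentioned right after axiom~\ref{item:downward_closure} in the preliminaries) that extending an independent set by a single element creates at most one circuit, and then locate the element $y$ within that circuit. Concretely, since $S'$ is a basis and $x \notin S'$, the set $S' \cup \{x\}$ cannot be independent, so it contains a unique circuit $C$, and this circuit must contain $x$.

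Next I would argue that $C$ has to meet $S' \setminus S$. Suppose otherwise, i.e., $C \subseteq (S \cup \{x\})$. Then $S \cup \{x\}$ would contain a circuit and hence be dependent, directly contradicting the hypothesis $S \cup \{x\} \in \mathcal{I}$. Therefore there exists some $y \in C$ with $y \notin S \cup \{x\}$; since $C \subseteq S' \cup \{x\}$ and $y \neq x$, we conclude $y \in S' \setminus S$, which is exactly the kind of element the claim asks for.

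Finally, I would verify that $(S' \setminus \{y\}) \cup \{x\}$ is independent. Because $C$ is the \emph{unique} circuit of $S' \cup \{x\}$ and $y \in C$, removing $y$ destroys it; by downward closure and the uniqueness of the circuit, the resulting set is independent. This yields the required element $y$.

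I do not foresee any real obstacle here: the statement is essentially the circuit/exchange property of matroids tailored to the situation where the ``replaceable'' element is forced to lie outside $S$. The only subtlety to be careful about is making sure $y \neq x$ when extracting $y$ from $C \setminus (S \cup \{x\})$, which is automatic since $x \in S \cup \{x\}$.
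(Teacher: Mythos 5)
Your proof is correct but takes a genuinely different route from the paper. The paper extends $S \cup \{x\}$ to an arbitrary basis $S''$ and then invokes the \emph{strong exchange property of bases} (citing Schrijver): since $x \in S'' \setminus S'$, there is $y \in S' \setminus S''$ with $(S' \setminus \{y\}) \cup \{x\}$ a basis, and $S \subseteq S''$ guarantees $y \notin S$. You instead argue via the \emph{unique circuit} created by adding $x$ to the basis $S'$: the circuit $C$ must contain an element $y$ outside $S \cup \{x\}$ (otherwise $S \cup \{x\}$ would be dependent), and deleting any element of the unique circuit of $S' \cup \{x\}$ restores independence. Both arguments are sound. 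Your approach is arguably more self-contained here, since the circuit-uniqueness fact is already stated in the paper's preliminaries, whereas the paper's proof appeals to a stronger external theorem; the paper's approach has the advantage of being a one-line application of a well-known bijective exchange result and immediately yields that the exchanged set is a basis (not merely independent), though that stronger conclusion is not needed for the claim.
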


\begin{proof}
  Let $S^{\prime\prime}$ be an arbitrary basis of $\mathcal{M}$ that contains
  $S \cup \{x\}$.
  Since $x \in S^{\prime\prime} \setminus S^\prime$, by the strong exchange
  property (see, e.g.,~\cite[Theorem 39.6]{schrijver2003}) of bases, there
  exists a
  $y \in S^\prime \setminus S^{\prime\prime} \subseteq S^\prime \setminus S$
  such that $(S^\prime \setminus \{y\}) \cup \{x\} \in \mathcal{I}$, completing
  the proof.
\end{proof}

We are now ready to prove \Cref{lemma:has_path}.

\begin{proof}[Proof of \Cref{lemma:has_path}]
  Let $\tilde{P} = \{s, v_1, \ldots, v_k, t\}$ be the shortest $st$-path in
  $\tilde{G}_{\tilde{S}}$ for $\tilde{S} = S_1 \cap S_2$.
  The existence of such a path is guaranteed by
  \Cref{lemma:has_path_unweighted}.
  We have $\tilde{S} \cup \{v_i\} \not\in \mathcal{I}_1$ and
  $\tilde{S} \cup \{v_i\} \not\in \mathcal{I}_2$ for each $1 < i < k$ since
  $\tilde{P}$ is the shortest path.
  For an odd $1 \leq i < k$, we have $v_i \not\in S$ and $v_{i+1} \in S$.
  If $v_i \not\in S_2 \setminus S_1$, then by \Cref{claim:still_edge}, we have
  $(v_{i}, v_{i+1}) \in E(G_{w^{\epsilon}, S})$.
  Similarly, for an even $1 \leq i < k$, if $v_{i+1} \not\in S_1 \setminus S_2$,
  then we have $(v_{i}, v_{i+1}) \in E(G_{w^{\epsilon}, S})$.
  Suppose that $v_1 \not\in S_1 \setminus S_2$, then by \Cref{claim:extend}, we
  can find a $v_0 \in S_1 \setminus S_2$ such that
  $(S_1 \setminus \{v_0\}) \cup \{v_1\} \in \mathcal{I}_1$.
  Similarly, if $v_k \not\in S_2 \setminus S_1$, then we can find a
  $v_{k+1} \in S_2 \setminus S_1$ such that
  $(S_2 \setminus \{v_{k+1}\}) \cup \{v_k\} \in \mathcal{I}_2$.
  Therefore, without loss of generality, we may assume that there exists the
  last vertex $v_i \in S_1 \setminus S_2$ and the first vertex
  $v_j \in S_2 \setminus S_1$ after $v_i$.
  Now, for each $i < k < j$, we have
  $v_k \not\in (S_1 \setminus S_2) \cup (S_2 \setminus S_1)$.
  Therefore, $(v_{k}, v_{k+1}) \in E(G_{w^{\epsilon}, S})$ holds for each
  $i \leq k < j$, and we obtain an $st$-path in $G_{w^{\epsilon}, S}$ as
  $P = \{s, v_i, \ldots, v_j, t\}$.
  This concludes the proof.
\end{proof}

Finally, to prove \Cref{lemma:augment}, we need the following results.

\begin{lemma}[{\cite[Proposition 2.4.1]{price2015}}]
  Given a matroid $\mathcal{M} = (V, \mathcal{I})$ and an $S \in \mathcal{I}$.
  Suppose that $(a_1, \ldots, a_p) \subseteq V \setminus S$ and
  $(b_1, \ldots, b_p) \subseteq S$ are two sequences satisfying the following
  conditions:
  \begin{enumerate}
    \item $(S \setminus \{b_i\}) \cup \{a_i\} \in \mathcal{I}$ for each
      $1 \leq i \leq p$ and
    \item $(S \setminus \{b_j\}) \cup \{a_i\} \not\in \mathcal{I}$ for each
      $1 \leq j < i \leq p$.
  \end{enumerate}
  Then,
  $(S \setminus \{b_1, \ldots, b_p\}) \cup \{a_1, \ldots, a_p\} \in \mathcal{I}$
  holds.
  \label{lemma:can_augment}
\end{lemma}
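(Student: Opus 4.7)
The plan is to proceed by induction on $p$, using the fundamental fact (noted just before the Weight-Splitting paragraph in the preliminaries) that adding a single element to an independent set creates at most one circuit. The base case $p=1$ is precisely condition~(1) for $i=1$.

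For the inductive step, I would first analyze the unique circuit generated when $a_p$ is added to $S$. If $S \cup \{a_p\} \in \mathcal{I}$, the conclusion follows from the inductive hypothesis applied to the truncated sequences, so assume there is a unique circuit $C_p \subseteq S \cup \{a_p\}$ (which must contain $a_p$). Condition~(1) for $i=p$ gives $(S \setminus \{b_p\}) \cup \{a_p\} \in \mathcal{I}$, which forces $b_p \in C_p$. Meanwhile, condition~(2) says that for every $j < p$ the set $(S \setminus \{b_j\}) \cup \{a_p\}$ is dependent, so deleting $b_j$ fails to destroy $C_p$, meaning $b_j \notin C_p$. These two observations pin down the circuit:
\begin{equation*}
  C_p \;\subseteq\; (S \setminus \{b_1, \ldots, b_{p-1}\}) \cup \{a_p\}.
\end{equation*}

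Next, I would apply the inductive hypothesis to $(a_1, \ldots, a_{p-1})$ and $(b_1, \ldots, b_{p-1})$, which clearly still satisfy both conditions, to obtain
\begin{equation*}
  S' \;:=\; (S \setminus \{b_1, \ldots, b_{p-1}\}) \cup \{a_1, \ldots, a_{p-1}\} \;\in\; \mathcal{I}.
\end{equation*}
Because $a_1, \ldots, a_{p-1} \in V \setminus S$, none of them lie in $C_p \subseteq S \cup \{a_p\}$, so the containment above lifts to $C_p \subseteq S' \cup \{a_p\}$. Since $S'$ is independent, $S' \cup \{a_p\}$ contains at most one circuit, and hence $C_p$ must be exactly it. Deleting $b_p \in C_p$ destroys that unique circuit, yielding the desired independent set $(S \setminus \{b_1, \ldots, b_p\}) \cup \{a_1, \ldots, a_p\} = (S' \cup \{a_p\}) \setminus \{b_p\}$.

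The main obstacle I expect is the lift of the circuit $C_p$ from $S \cup \{a_p\}$ to $S' \cup \{a_p\}$; everything hinges on $a_1, \ldots, a_{p-1}$ being disjoint from the support of $C_p$, which in turn uses the $a_i \in V \setminus S$ hypothesis essentially. The rest is routine matroid bookkeeping, with a minor technicality that the $a_i$'s (and $b_i$'s) should be distinct---this is easy to derive from conditions~(1) and~(2) by a short pigeonhole argument, or may be assumed without loss of generality.
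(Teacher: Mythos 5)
The paper does not prove this lemma; it is cited verbatim from Price's thesis as a known result, so there is no ``paper proof'' to compare against. Your argument is the standard inductive proof via circuit analysis (essentially the one attributed to Brualdi/Krogdahl), and it is correct. The key steps all check out: condition~(1) with $i=p$ forces $b_p \in C_p$, condition~(2) with $j < p$ forces $b_j \notin C_p$ (since $S\setminus\{b_j\}$ is independent, the circuit of $(S\setminus\{b_j\})\cup\{a_p\}$ must be the unique circuit $C_p$ of $S\cup\{a_p\}$, hence $C_p$ survives the deletion), and thus $C_p \subseteq (S\setminus\{b_1,\ldots,b_{p-1}\})\cup\{a_p\} \subseteq S'\cup\{a_p\}$, where $S'$ is supplied by the inductive hypothesis. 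Since $a_p \notin S'$, the set $S'\cup\{a_p\}$ has at most one circuit, which must be $C_p$, and $b_p \in S' \cap C_p$, so deleting $b_p$ yields independence. The distinctness of the $a_i$'s (and of the $b_i$'s), which you correctly flag as needed both for $a_p \notin S'$ and for $b_p \in S'$, indeed follows immediately by pigeonhole: if $a_i = a_j$ or $b_i = b_j$ for $j < i$, condition~(1) for the smaller index contradicts condition~(2).

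One small imprecision: you write that if $S\cup\{a_p\}\in\mathcal{I}$ then ``the conclusion follows from the inductive hypothesis applied to the truncated sequences.'' As stated, this is not a justified deduction---the inductive hypothesis only gives $S'\in\mathcal{I}$, not $(S'\cup\{a_p\})\setminus\{b_p\}\in\mathcal{I}$. Fortunately the case is vacuous for $p\geq 2$: condition~(2) with $j=1$, $i=p$ gives $(S\setminus\{b_1\})\cup\{a_p\}\notin\mathcal{I}$, and since this is a subset of $S\cup\{a_p\}$, downward closure (M1) forces $S\cup\{a_p\}\notin\mathcal{I}$. You should replace the hand-wave with this observation; with that change the proof is airtight.
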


\begin{lemma}[{\cite[Lemma 2.4.2]{price2015}}]
  Let $\mathcal{M}$, $S$, $(a_1, \ldots, a_p)$, and $(b_1, \ldots, b_p)$ be the
  same as in \Cref{lemma:can_augment}.
  Let $S^\prime = (S \setminus \{b_1, \ldots, b_p\}) \cup \{a_1, \ldots, a_p\}$.
  For $x \in S^\prime$ and $y \in V \setminus S^\prime$, if
  $(S^\prime \setminus \{x\}) \cup \{y\} \in \mathcal{I}$
  but either $y \in S$ or $(S \setminus \{x\}) \cup \{y\} \not\in \mathcal{I}$,
  then there exists
  $1 \leq \ell \leq k \leq p$ such that
  $(S \setminus \{x\}) \cup \{a_k\} \in \mathcal{I}$ and
  either $b_\ell = y$ or $(S \setminus \{b_\ell\}) \cup \{y\} \in \mathcal{I}$.
  \label{lemma:new_edges}
\end{lemma}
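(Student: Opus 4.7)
The plan is to analyze the unique circuit $C$ of $S' \cup \{y\}$, which exists because $S' \in \mathcal{I}$ and $(S' \setminus \{x\}) \cup \{y\} \in \mathcal{I}$ together make $S' \cup \{y\}$ a dependent set of rank $|S'|$, hence containing exactly one circuit; moreover $x \in C$ since removing $x$ restores independence. In parallel, for each $i$ let $D_i$ denote the fundamental circuit of $a_i$ with respect to $S$. The two hypotheses of \Cref{lemma:can_augment} then give the structural facts $b_i \in D_i$ and $b_j \not\in D_i$ for all $j < i$, and the idea is to transport $C$, which lives in $S' \cup \{y\}$, back to a circuit of the form $D_k \subseteq S \cup \{a_k\}$ via a chain of strong circuit eliminations against the $D_i$'s.

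First I would argue that $C$ must intersect $\{a_1, \ldots, a_p\}$: otherwise $C \subseteq (S \setminus \{b_1, \ldots, b_p\}) \cup \{y\} \subseteq S \cup \{y\}$ would make $C$ the unique circuit of $S \cup \{y\}$ when $y \not\in S$, which, together with $x \in C$, would force $(S \setminus \{x\}) \cup \{y\} \in \mathcal{I}$, contradicting the hypothesis. The case $y \in S$ is handled at the very start by observing that $y \not\in S'$ forces $y = b_\ell$ for some $\ell$, giving the ``$b_\ell = y$'' alternative for free. Let $k$ be the largest index with $a_k \in C$. I would then iteratively eliminate each $a_j \in C$ with $j < k$ by strong circuit elimination against $D_j$, keeping $x$ in every intermediate circuit (using $x \neq a_j$) and relying on the monotonicity $b_j \not\in D_i$ for $j < i$ so that no ``early'' $b$-index leaks back in. A final elimination against the fundamental circuit of $y$ in $S$ (which exists precisely because $(S \setminus \{x\}) \cup \{y\} \not\in \mathcal{I}$ when $y \not\in S$) removes $y$ as well, producing a circuit inside $S \cup \{a_k\}$ that still contains $x$; by uniqueness this must be $D_k$, so $x \in D_k$ and hence $(S \setminus \{x\}) \cup \{a_k\} \in \mathcal{I}$.

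The claim about $\ell$ and the ordering $\ell \leq k$ then fall out of the symmetric analysis: either $y = b_\ell$ is fixed from the case $y \in S$, or tracking which $b_i$'s enter the working circuit during the elimination chain picks out a $b_\ell$ such that $(S \setminus \{b_\ell\}) \cup \{y\} \in \mathcal{I}$, and since the chain only ever introduces $b_i$'s via $D_j$ with $j \leq k$ the resulting index satisfies $\ell \leq k$. The main obstacle I anticipate is the careful combinatorial bookkeeping of this iterated elimination---verifying at each step that $x$ stays in the current circuit, that only the intended $a_j$ is eliminated, and that no $b$-index smaller than the eventual $\ell$ sneaks in. I expect induction on $|C \cap \{a_1, \ldots, a_p\}|$, combined with the monotonicity properties of the $D_i$'s, to make this bookkeeping manageable; the edge case $x \in \{a_1, \ldots, a_p\}$, which does not arise from the intended use in \Cref{lemma:augment}, admits its own short argument by simply taking $k$ to be the index of $x$ itself.
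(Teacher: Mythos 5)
First, a point of process: the paper does \emph{not} prove this lemma --- it is imported verbatim from Price's thesis (\cite[Lemma 2.4.2]{price2015}), and \Cref{appendix:omitted_proofs} only proves the claims built on top of it. So there is no in-paper proof to line your attempt up against; the question is whether your sketch is sound on its own merits.

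Your high-level plan --- transport the fundamental circuit $C$ of $y$ in $S'$ back to a circuit in $S \cup \{a_k\}$ via iterated strong circuit elimination against the $D_i$'s --- is plausible, and the setup (uniqueness of $C$, $x \in C$, $C$ must meet $\{a_1,\ldots,a_p\}$, $D_y$ exists when $y \notin S$) is correct. But two of the steps you flag as ``bookkeeping'' are not merely delicate; as written they fail. (i) The justification ``keeping $x$ in every intermediate circuit (using $x \neq a_j$)'' is wrong: strong circuit elimination of $a_j$ from a circuit $C'$ against $D_j$ preserves a designated element $f$ only when $f \in C' \setminus D_j$. Since $D_j \subseteq S \cup \{a_j\}$ and $x \in S$ in the main case, nothing rules out $x \in D_j$, in which case the elimination axiom gives no control over whether $x$ survives; knowing $x \neq a_j$ is irrelevant. (ii) The inequality $\ell \leq k$ is asserted but never derived from the chain. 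Eliminating $a_j$ against $D_j$ for $j < k$ can introduce any $b_i$ with $i \geq j$ into the working circuit --- the monotonicity $b_i \notin D_j$ for $i < j$ bounds indices from \emph{below}, not from above by $k$ --- and the ``$\ell$'' half of the conclusion ($y = b_\ell$ or $(S \setminus \{b_\ell\}) \cup \{y\} \in \mathcal{I}$ for some $\ell \leq k$) does not fall out of the sequence of eliminations you describe. Finally, your dismissal of the case $x \in \{a_1,\ldots,a_p\}$ is incorrect as stated: if $x = a_m$ then $S \setminus \{x\} = S$, so ``taking $k$ to be the index of $x$'' asks for $S \cup \{a_m\} \in \mathcal{I}$, which fails whenever $S$ is a basis --- precisely the regime in which the lemma is invoked (with $S = S_1$ or $S_2$). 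That case, if it is to be admitted at all, needs a genuinely separate treatment or an explicit $x \in S$ hypothesis.

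These are not polish items: (i) threatens the entire elimination chain on which your conclusion $x \in D_k$ rests, and (ii) leaves half the stated conclusion unproven.
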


In essence, \Cref{lemma:can_augment} captures the validity of an augmentation
while \Cref{lemma:new_edges} models the condition in which new exchange
relationships emerge in the augmented independent set.
The following three claims imply \Cref{lemma:augment}.
Recall that $P = \{s, v_1, \ldots, v_k, t\}$ is the shortest $st$-path with the
least number of edges and $d(x)$ is the $sx$-distance in $G_{w^{\epsilon}, S}$.

\begin{claim}
  $|\hat{S}_1 \cap \hat{S}_2| > |S_1 \cap S_2|$ holds.
\end{claim}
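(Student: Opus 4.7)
The plan is to prove the claim by a vertex-by-vertex contribution analysis along $P = (s, v_1, \ldots, v_k, t)$. Writing $A = \tail(P \cap E_1)$, $B = \head(P \cap E_1)$, $C = \head(P \cap E_2)$, and $D = \tail(P \cap E_2)$, we have $\hat{S}_1 = (S_1 \setminus A) \cup B$ and $\hat{S}_2 = (S_2 \setminus C) \cup D$, so only vertices on $P$ contribute to $|\hat{S}_1 \cap \hat{S}_2| - |S_1 \cap S_2|$, and the goal is to show this net change equals $+1$.

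First I would establish that the internal edges $e_1 = (v_1, v_2), \ldots, e_{k-1} = (v_{k-1}, v_k)$ of $P$ strictly alternate between $E_1$ and $E_2$. An $E_1$ edge has its tail in $S_1$ and its head in $\bar{S_1}$, so two consecutive $E_1$ edges would force their shared vertex to be both in $S_1$ and in $\bar{S_1}$; a symmetric argument rules out two consecutive $E_2$ edges. Combined with the definitions of $E_s$ and $E_t$, which pin down $v_1 \in S_1 \setminus S_2$ and $v_k \in S_2 \setminus S_1$, this alternation forces each interior $v_i$ ($1 < i < k$) to land in exactly one of two classes: $v_i \in \bar{S_1} \cap \bar{S_2}$ (when $v_i$ is the head of an $E_1$ edge on $P$ and the tail of an $E_2$ edge), or $v_i \in S_1 \cap S_2$ (when $v_i$ is the head of an $E_2$ edge and the tail of an $E_1$ edge).

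Next I would read off each vertex's contribution. An interior $v_i \in \bar{S_1} \cap \bar{S_2}$ lies in $B \cap D$, hence enters both $\hat{S}_1$ and $\hat{S}_2$, contributing $+1$ to $|\hat{S}_1 \cap \hat{S}_2| - |S_1 \cap S_2|$. An interior $v_i \in S_1 \cap S_2$ lies in $A \cap C$ and is removed from both, contributing $-1$. The boundary vertices $v_1$ and $v_k$ contribute according to the types $a$ of $e_1$ and $b$ of $e_{k-1}$: whether each lies in $A \cup D$ or in $B \cup C$ determines whether it gains or loses membership in the opposite side's set.

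Finally I would perform a four-case analysis over $(a, b) \in \{1, 2\}^2$; by alternation, $a = b$ iff $k$ is even. In each case the interior $+1$'s and $-1$'s pair up along $P$ with a residue of at most one, and this residue combined with the boundary contributions always sums to exactly $+1$ (for example, when $a = b = 1$, the interior contributions cancel and only $v_k$ contributes $+1$, while when $a = 1, b = 2$, the boundary contributes $0$ but the interior has one excess $+1$). The main obstacle is purely bookkeeping, namely verifying that the boundary vertices are correctly classified in each sub-case and that the residual interior imbalance has the right sign; once the edge-type alternation along $P$ is established, each case is an immediate count.
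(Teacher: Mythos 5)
Your proposal is correct and follows essentially the same route as the paper: classify interior path vertices as lying in $S_1 \cap S_2$ (net $-1$) or $\bar{S_1} \cap \bar{S_2}$ (net $+1$), account separately for the boundary vertices $v_1$ and $v_k$, and finish with a four-way case analysis --- the paper indexes its cases by the parity of $k$ and whether $v_2 \in S_1 \cap S_2$ versus $v_2 \in \bar{S_1} \cap \bar{S_2}$, which carries exactly the same information as your pair $(a,b)$. A minor bonus of your version is that you obtain the interior-vertex classification purely from the forced $E_1$/$E_2$ alternation of consecutive edges (the paper instead invokes $P$ being shortest, which is not actually needed for that step), and your uniform vertex-by-vertex accounting handles $k = 2$ without the paper's separate base case.
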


\begin{proof}
  If $k = 2$, then either $v_1 \in \hat{S}_1 \cap \hat{S}_2$ or
  $v_k \in \hat{S}_1 \cap \hat{S}_2$ must hold, depending on whether
  $(v_1,v_2) \in E_1$ or $(v_1,v_2) \in E_2$, and the claim trivially holds
  in this case.
  Thus, in the following, we assume that $k > 2$.
  Since $P$ is the shortest, we may assume that
  $v_i \in (S_1 \cap S_2) \cup (\bar{S_1} \cap \bar{S_2})$ holds for each
  $1 < i < k$.
  Also, for $1 < i < k - 1$, if $v_{i} \in S_1 \cap S_2$, then $v_{i+1}$ must be
  in $\bar{S_1} \cap \bar{S_2}$ due to the way $G_{w^{\epsilon}, S}$ is constructed.
  Similarly, if $v_{i} \in \bar{S_1} \cap \bar{S_2}$, then we must
  have $v_{i+1} \in S_1 \cap S_2$.
  Let $P_{\text{mid}} = \{v_2, \ldots, v_{k-1}\}$, $I = S_1 \cap S_2$, and
  $O = \bar{S_1} \cap \bar{S_2}$.
  Clearly, we have
  \begin{equation}
    |\hat{S}_1 \cap \hat{S}_2| - |S_1 \cap S_2|
      = |P_{\text{mid}} \cap O| - |P_{\text{mid}} \cap I| +
      \llbracket v_1 \in \hat{S}_1 \cap \hat{S}_2 \rrbracket +
      \llbracket v_k \in \hat{S}_1 \cap \hat{S}_2 \rrbracket.
    \label{eq:diff}
  \end{equation}
  We prove the claim by considering the following four possible cases.
  \begin{description}
    \item[{$k$ is even and $v_2 \in I$}.] We have
      $(v_1, v_2) \in E_2$, $(v_{k-1}, v_k) \in E_2$, and
      $|P_{\text{mid}} \cap I| = |P_\text{mid} \cap O|$.
      Also, $v_1 \in \tail(P \cap E_2)$ and therefore
      $v_1 \in \hat{S}_1 \cap \hat{S}_2$.
    \item[{$k$ is even and $v_2 \in O$}.] We have $(v_1, v_2) \in E_1$,
      $(v_{k-1}, v_k) \in E_1$, and
      $|P_{\text{mid}} \cap I| = |P_\text{mid} \cap O|$.
      Also, $v_k \in \operatorname{head}(P \cap E_1)$ and therefore
      $v_k \in \hat{S}_1 \cap \hat{S}_2$.
    \item[{$k$ is odd and $v_2 \in I$}.] We have $(v_1, v_2) \in E_2$,
      $(v_{k-1}, v_k) \in E_1$, and
      $|P_{\text{mid}} \cap I| = |P_\text{mid} \cap O| + 1$.
      Also,
      $v_1 \in \tail(P \cap E_2)$, $v_k \in \head(P \cap E_1)$
      and therefore $v_1, v_k \in \hat{S}_1 \cap \hat{S}_2$.
    \item[{$k$ is odd and $v_2 \in O$}.] We have $(v_1, v_2) \in E_1$,
      $(v_{k-1}, v_k) \in E_2$, and
      $|P_{\text{mid}} \cap I| = |P_\text{mid} \cap O| - 1$.
  \end{description}
  In all cases, we have $|\hat{S}_1 \cap \hat{S}_2| > |S_1 \cap S_2|$ via
  \Cref{eq:diff}, concluding the proof.
\end{proof}

We prove the following claims for $i = 1$.
The proofs for $i = 2$ follow analogously.

\begin{claim}
  $\hat{S}_i \in \mathcal{I}_i$ holds for each $i \in \{1, 2\}$.
  \label{claim:keep_independence}
\end{claim}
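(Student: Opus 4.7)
My plan is to apply Lemma~\ref{lemma:can_augment} to matroid $\mathcal{M}_1$ with $S = S_1$ and the sequences $(b_1, \ldots, b_p)$, $(a_1, \ldots, a_p)$ taken to be the tails and heads of the edges of $P \cap E_1$ listed in the order they appear along $P$; the symmetric application to $\mathcal{M}_2$ (using $P \cap E_2$ with the head/tail roles swapped, to match the orientation of $E_2$) handles $\hat{S}_2 \in \mathcal{I}_2$. Condition~(1) of Lemma~\ref{lemma:can_augment} is immediate from the definition of $E_1$, so the task reduces to verifying Condition~(2): for every $1 \le j < k \le p$, the chord $(b_j, a_k)$ is absent from $E_1$.

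I plan to verify Condition~(2) by contradiction. Suppose $e^{*} = (b_j, a_k) \in E_1$ for some $j < k$; then the simple $st$-path $P'$ obtained from $P$ by splicing out the $b_j$-to-$a_k$ subpath of $P$ (which has at least three edges, because the edges of $P$ alternate between $E_1$ and $E_2$) and inserting $e^{*}$ satisfies $|P'| < |P|$. Since each $S_i$ is $w^\epsilon_i$-maximum, every edge of $G$ carries non-negative weight, so each prefix of the shortest path $P$ is itself a shortest path; in particular the excised subpath has weight exactly $d(a_k) - d(b_j)$. The Dijkstra relaxation for $e^{*}$ yields $w(e^{*}) \ge d(a_k) - d(b_j)$; combining this with the extremality of $P$ as a shortest $st$-path should force equality, giving $w(P') = w(P)$. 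But then $P'$ is a shortest $st$-path with strictly fewer edges than $P$, contradicting the choice of $P$. Once Condition~(2) is secured, Lemma~\ref{lemma:can_augment} immediately yields $\hat{S}_1 \in \mathcal{I}_1$.

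The main obstacle I anticipate is justifying the tightness step $w(e^{*}) = d(a_k) - d(b_j)$: the Dijkstra relaxation delivers only the $\ge$ direction, and the reverse inequality must be extracted from the extremality of $P$ (shortest together with fewest edges) combined with the exchange structure of $E_1$, since a priori a slack chord would yield $w(P') > w(P)$ and no direct contradiction with the extremality of $P$. One possible route is to reorder the pairs $(b_k, a_k)$ (e.g.\ by $d(a_k)$) so that Condition~(2) holds in the new order even when the path-order ordering does not; another is to appeal to Lemma~\ref{lemma:new_edges} to track which exchange relations persist through the multi-element swap. Either way, once Condition~(2) is in hand, Lemma~\ref{lemma:can_augment} completes the argument.
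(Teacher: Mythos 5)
Your overall strategy matches the paper's: reduce $\hat{S}_1 \in \mathcal{I}_1$ to Condition~(2) of \Cref{lemma:can_augment} and rule out chords using the extremality of $P$. But the main argument as you present it has a genuine gap, and it is exactly the one you flag yourself: if the chord $e^{*} = (b_j, a_k)$ is \emph{slack}, i.e.\ $w(e^{*}) > d(a_k) - d(b_j)$, then no contradiction arises --- the chord simply exists without yielding a shorter or equal-weight path --- yet it still violates Condition~(2) for the path-ordered sequences. There is no way to ``extract the reverse inequality from the extremality of $P$'' in path order, because the potential $d(b) + w^{\epsilon}_1(b)$ need not be monotone along $P$, so a slack chord from an early $b_j$ to a late $a_k$ is entirely consistent with $P$ being a shortest path with fewest edges. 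Condition~(2) can genuinely fail in path order, so the contradiction you aim for is not available.

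The fix is the reordering you mention only in passing, but the ordering must be chosen correctly: the paper reorders the pairs by the potential $d(b_i) + w^{\epsilon}_1(b_i)$, which equals $d(a_i) + w^{\epsilon}_1(a_i)$ because every edge of $P$ is tight, breaking ties by position along $P$. Then a chord $(b_j, a_i)$ with $j < i$ in the \emph{new} order gives, via relaxation, $d(b_j) + w^{\epsilon}_1(b_j) \geq d(a_i) + w^{\epsilon}_1(a_i) = d(b_i) + w^{\epsilon}_1(b_i)$, while the ordering gives the reverse inequality; equality forces the chord to be tight, and the tie-breaking rule guarantees $(b_j,a_j)$ precedes $(b_i,a_i)$ on $P$, so splicing in the chord produces an equal-weight $st$-path with strictly fewer edges, contradicting the choice of $P$. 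Your suggested ordering by $d(a_k)$ alone does not work (the chord inequality compares potentials, not distances), and \Cref{lemma:new_edges} is not the relevant tool here --- it is used for the $\hat{w}^{\epsilon}_1$-maximality claim, not for independence. So the skeleton is right, but the decisive step --- the correct reordering and the tie-break that converts a tight chord into an edge-count contradiction --- is missing.
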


\begin{proof}
  Let $P_1 = P \cap E_1 = \{(b_1, a_1), (b_2, a_2), \ldots (b_p, a_p)\}$, where
  $(S_1 \setminus \{b_i\}) \cup \{a_i\} \in \mathcal{I}_1$ holds for each
  $1 \leq i \leq p$.
  Since $P$ is the shortest path,
  \[
    d(b_i) + u_1(b_i) - u_1(a_i) = d(a_i)
      \implies d(b_i) + u_1(b_i) = d(a_i) + u_1(a_i)
  \]
  holds for each $i$.
  Reorder $P_1$ so that
  $d(b_1) + u_1(b_1) \leq d(b_2) + u_1(b_2) \leq \cdots \leq d(b_p) + u_1(b_p)$.
  Moreover, if $d(b_i) + u_1(b_i) = d(b_j) + u_1(b_j)$ for some $i, j$, then
  $(b_i, a_i)$ precedes $(b_j, a_j)$ in $P_1$ if and only if $(b_i, a_i)$
  precedes $(b_j, a_j)$ in $P$.
  It follows that for each $1 \leq j < i \leq p$, it holds that
  $(S_1 \setminus \{b_j\}) \cup \{a_i\} \not\in \mathcal{I}_1$ since otherwise
  we would have
  \begin{equation}
    d(b_j) + u_1(b_j) - u_1(a_i) \geq d(a_i)
      \implies d(b_j) + u_1(b_j) \geq d(a_i) + u_1(a_i).
    \label{eq:contradiction}
  \end{equation}
  Because $j < i$, \eqref{eq:contradiction} must take equality, but this would
  contradict with the fact the $P$ has the least number of edges since the edge
  $(b_j, a_i)$ ``jumps'' over vertices $a_j, b_{j+1},\ldots,b_i$ in $P$ and has
  the same weight as the subpath $b_j, a_j, \ldots, b_i, a_i$.
  As such, by \Cref{lemma:can_augment}, the claim is proved.
\end{proof}

\begin{claim}
  $\hat{S}_i$ is $\hat{w}^{\epsilon}_i$-maximum for each $i \in \{1, 2\}$.
  \label{claim:keep_maximum}
\end{claim}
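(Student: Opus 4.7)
The plan is to verify that $\hat{S}_1$ is $\hat{w}^{\epsilon}_1$-maximum via the standard exchange characterization of maximum-weight bases: a basis $B$ of $\mathcal{M}_1$ is $f$-maximum if and only if $f(x) \geq f(y)$ whenever $x \in B$, $y \notin B$, and $(B \setminus \{x\}) \cup \{y\} \in \mathcal{I}_1$. The case $i = 2$ will follow by a symmetric argument (using $\hat{w}^{\epsilon}_2 = w^{\epsilon}_2 - d$ and the fact that $E_2$-edges are oriented from $\bar{S_2}$ into $S_2$), so I focus on $i = 1$ and fix a candidate swap pair $(x, y)$ relative to $\hat{S}_1$. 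Since $\hat{w}^{\epsilon}_1 = w^{\epsilon}_1 + d$, the target inequality is
\[
  d(y) - d(x) \;\leq\; w^{\epsilon}_1(x) - w^{\epsilon}_1(y).
\]

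The easy case is when $x \in S_1$, $y \notin S_1$, and $(S_1 \setminus \{x\}) \cup \{y\} \in \mathcal{I}_1$: then $(x, y)$ is already an edge of $E_1$ of weight $w^{\epsilon}_1(x) - w^{\epsilon}_1(y)$, and the required bound is precisely the shortest-path relaxation $d(y) \leq d(x) + w(x,y)$. Otherwise (that is, $y \in S_1$ or $(S_1 \setminus \{x\}) \cup \{y\} \notin \mathcal{I}_1$), I will invoke \Cref{lemma:new_edges} with $\mathcal{M} = \mathcal{M}_1$, $S = S_1$, $S^\prime = \hat{S}_1$, and the augmenting pairs $\{(b_i, a_i)\}_{i=1}^{p}$ drawn from $P \cap E_1$, reordered as in the proof of \Cref{claim:keep_independence} so that $d(b_i) + w^{\epsilon}_1(b_i)$ is non-decreasing in $i$.

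\Cref{lemma:new_edges} then supplies indices $\ell \leq k$ with $(S_1 \setminus \{x\}) \cup \{a_k\} \in \mathcal{I}_1$ and either $b_\ell = y$ or $(S_1 \setminus \{b_\ell\}) \cup \{y\} \in \mathcal{I}_1$. Because $P$ is a shortest $st$-path, each of its edges $(b_i, a_i)$ is tight, i.e., $d(a_i) = d(b_i) + w^{\epsilon}_1(b_i) - w^{\epsilon}_1(a_i)$. Combining the certified edge $(x, a_k) \in E_1$ with tightness at $(b_k, a_k)$ gives
\[
  d(b_k) + w^{\epsilon}_1(b_k) \;\leq\; d(x) + w^{\epsilon}_1(x),
\]
and the edge $(b_\ell, y) \in E_1$ (or the trivial equality when $y = b_\ell$) combined with the relaxation of $d$ at $y$ gives
\[
  d(y) + w^{\epsilon}_1(y) \;\leq\; d(b_\ell) + w^{\epsilon}_1(b_\ell).
\]
The chosen ordering together with $\ell \leq k$ ensures $d(b_\ell) + w^{\epsilon}_1(b_\ell) \leq d(b_k) + w^{\epsilon}_1(b_k)$, and chaining the three inequalities yields $\hat{w}^{\epsilon}_1(y) \leq \hat{w}^{\epsilon}_1(x)$, as desired.

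The main subtlety I anticipate is bookkeeping: one has to verify that the single reordering used to certify the hypotheses of \Cref{lemma:new_edges} (inherited from \Cref{claim:keep_independence}) is simultaneously the one supplying the monotonicity of $d + w^{\epsilon}_1$ aligned with $\ell \leq k$. Once this alignment is in place, the proof collapses to chaining shortest-path tight edges, and no additional matroid-theoretic content is required beyond \Cref{lemma:new_edges} and the tightness of edges on $P$.
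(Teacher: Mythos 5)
Your proof is correct and is essentially the paper's own argument: the same exchange characterization of $f$-maximum bases, the same two-case split on whether $(S_1\setminus\{x\})\cup\{y\}\in\mathcal{I}_1$, the same invocation of \Cref{lemma:new_edges} with the ordering inherited from \Cref{claim:keep_independence}, and the same chaining of tight shortest-path inequalities. The only cosmetic difference is that you route the middle link of the chain through $\hat{w}^{\epsilon}_1(b_k)$ while the paper writes it as $\hat{w}^{\epsilon}_1(a_k)$; these are equal by tightness of the edge $(b_k,a_k)$, so the two presentations coincide.
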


\begin{proof}
  Let $P_1 = P \cap E_1 = \{(b_1, a_1), \ldots, (b_p, a_p)\}$ be ordered the
  same way as in the proof of \Cref{claim:keep_independence}.
  It suffices to show that
  $\hat{w}^{\epsilon}_1(x) \geq \hat{w}^{\epsilon}_1(y)$ holds for each
  $x \in \hat{S}_1$ and $y \not\in \hat{S}_1$ with
  $(\hat{S}_1 \setminus \{x\}) \cup \{y\} \in \mathcal{I}_1$.
  Consider the following two cases.
  
  \begin{enumerate}
    \item $(S_1 \setminus \{x\}) \cup \{y\} \in \mathcal{I}_1$: Since
      $(x, y) \in E_1$, it follows that
      \[
        d(x) + u_1(x) - u_1(y) \geq d(y)
          \implies \hat{w}^{\epsilon}_1(x) = d(x) + u_1(x) \geq
                                   d(y) + u_1(y) = \hat{w}^{\epsilon}_1(y).
      \]
    \item $(S_1 \setminus \{x\}) \cup \{y\} \not\in \mathcal{I}_1$: By
      \Cref{lemma:new_edges}, there exists $1 \leq \ell \leq k \leq p$ such that
      (1) $(S_1 \setminus \{x\}) \cup \{a_k\} \in \mathcal{I}_1$ and either
      (2.1) $b_{\ell} = y$ or
      (2.2) $(S_1 \setminus \{b_{\ell}\}) \cup \{y\} \in \mathcal{I}_1$.
      (1) implies that
      $\hat{w}^{\epsilon}_1(x) \geq \hat{w}^{\epsilon}_1(a_k)$.
      If (2.1) holds, then
      $\hat{w}^{\epsilon}_1(x) \geq \hat{w}^{\epsilon}_1(a_k)
        \geq \hat{w}^{\epsilon}_1(b_{\ell}) = \hat{w}^{\epsilon}_1(y)$.
      If (2.2) holds, then
      $\hat{w}^{\epsilon}_1(x) \geq \hat{w}^{\epsilon}_1(a_k)
        \geq \hat{w}^{\epsilon}_1(b_{\ell}) \geq \hat{w}^{\epsilon}_1(y)$.
  \end{enumerate}
  
  The claim is proved.
\end{proof}

\end{document}